\newenvironment{transducer}[1][]
	{\begin{center}
	\begin{tikzpicture}[font=\footnotesize,shorten >=1pt,node distance=3cm,
	on grid,>=stealth',initial text=,every node/.style={align=center},
	every state/.style={inner sep=1pt},
	every path/.style={->,bend angle=20},#1]}
	{\end{tikzpicture}\end{center}}
\newcommand\pssi{\par\smallskip\indent}
\newcommand\pmsi{\par\medskip\indent}
\newcommand\pssn{\par\smallskip\noindent}
\newcommand\pmsn{\par\medskip\noindent}
\newcommand\pbsn{\par\bigskip\noindent}
\newcommand\pnsi{\par\indent}
\newcommand\pnsn{\par\noindent}
\newcommand\emdef[1]{\emph{#1}}  % definition
\newcommand\emshort[1]{\emph{\textbf{#1}}}
\theoremstyle{plain}
\newtheorem{theorem}{Theorem}[]
\newtheorem{proposition}[theorem]{Proposition}
\newtheorem{lemma}[theorem]{Lemma}
\theoremstyle{definition}
\newtheorem{definition}[theorem]{Definition}
\newtheorem{example}[theorem]{Example}
\newenvironment{EX}{\begin{example}}{$\>\>\Box$\end{example}}
\theoremstyle{remark}
\newtheorem{remark}[theorem]{Remark}
\newcommand{\ttp}{\texttt{p}}
\newcommand{\ttps}{\texttt{p}\;}
\newcommand{\ttq}{\texttt{q}}
\newcommand{\ttqs}{\texttt{q}\;}
\newcommand{\cP}{\ensuremath{\mathcal{P}}\xspace}
\newcommand{\cQ}{\ensuremath{\mathcal{Q}}\xspace}
\renewcommand{\phi}{\varphi}
\renewcommand{\epsilon}{\varepsilon}
\newcommand\true{\texttt{True }}
\newcommand\falsen{\texttt{False}}
\newcommand\none{\texttt{None }}
\newcommand\nonen{\texttt{None}}
\newcommand{\sse}{\subseteq}
\newcommand{\es}{\emptyset}
\newcommand\card[1]{|#1|}
\newcommand\pset[1]{2^{#1}}  %power set
\newcommand{\N}{\mathbb{N}}
\newcommand{\Z}{\mathbb{Z}}
\newcommand{\R}{\mathbb{R}}
\newcommand{\e}{\epsilon}
\newcommand{\ew}{\e}      % empty word
\newcommand\al{\Sigma}        %alphabet
\newcommand\alD{\Delta}        %other alphabet
\newcommand\pty{\cP}  % any language property
\newcommand\iatp{\pty^{\mathrm al}}
\newcommand\edp{\pty^{\mathrm ed}}
\newcommand\ecp{\pty^{\mathrm ec}}
\newcommand{\ree}{\bar{e}}  % regular expression e
\newcommand\aut{\mathbf{a}}   %arbitrary automaton
\newcommand\autb{\mathbf{b}}   % automaton b
\newcommand{\aai}{\texttt{a }}   % automaton a implementation
\newcommand{\abi}{\texttt{b }}   % automaton b implementation
\newcommand{\aain}{\texttt{a}}   % automaton a implementation no space
\newcommand\lang{\ensuremath\mathrm{L}\xspace} %language accepted by
\newcommand\tr{\mathbf{t}}    %arbitrary transducer
\newcommand{\trt}{\mathbf{t}}   % transducer t
\newcommand\trs{\mathbf{s}}      % transducer s
\newcommand\tru{\mathbf{u}}      % transducer u
\newcommand{\tti}{\texttt{t }}   % transducer t implementation
\newcommand{\tsi}{\texttt{s }}   % transducer s implementation
\newcommand{\ttin}{\texttt{t}}   % transducer t implementation no space
\newcommand{\tsin}{\texttt{s}}   % transducer s implementation no space
\newcommand{\ppi}{\texttt{p }}   % property p implementation
\newcommand{\ppin}{\texttt{p}}   % property p implementation no space
\newcommand\rel{\ensuremath\mathrm{R}\xspace} %relation realized by
\newcommand\sz[1]{|#1|}       %size of object
\newcommand\compose{\circ}  % composition
\newcommand\po{\le}             %partial order
\newcommand\sop{\&}           %semigroup op
\newcommand\rep[1]{[#1]}      %reperesents op
\newcommand\gene[1]{\langle#1\rangle}   %generates op
\newcommand{\psetq}{\cQ}  % set of properties
\newcommand\szg[1]{\|#1\|}
\newcommand\diag{\mathrm{diag}}
\begin{document}

\begin{center}
\textbf{\Large Symbolic Manipulation of Code Properties
%via FAdo and online formal language service (LaSer)
}
\pbsn
{\large Stavros Konstantinidis$^{1}$, Casey Meijer$^{1}$,
Nelma Moreira$^{2}$, Rog{\'e}rio Reis$^{2}$}
\end{center}

\pmsn
$^{1}$ Department of Mathematics and Computing Science,
Saint Mary's University, Halifax, Nova Scotia, Canada,
\texttt{s.konstantinidis@smu.ca, dylanyoungmeijer@gmail.com}
\pmsn
$^{2}$  CMUP \& DCC, Faculdade de Ci{\^e}ncias da Universidade do Porto,
Rua do Campo Alegre, 4169–007 Porto Portugal
\texttt{\{nam,rvr\}@dcc.fc.up.pt}

\pbsn
\textbf{Abstract.}
%Several programming libraries have been developed supporting formal language objects, such as automata and regular expressions. Some of these systems allow the symbolic manipulation of such objects using tools of a simple script language.
%

The FAdo system is a symbolic manipulator of formal languages objects, implemented in Python. In this work, we extend its capabilities by implementing methods to manipulate transducers and we go one level higher than existing formal language systems and implement methods to manipulate objects representing
classes of independent languages (widely known as code properties).
Our methods allow users to define their own code properties and combine them
between themselves or with fixed properties such as prefix codes, suffix codes,
error detecting codes, etc. The satisfaction and maximality decision questions are
solvable for any of the definable properties. The new online system LaSer allows to query about code properties and obtain the answer in a batch mode.
%In this work, we extend the capabilities of the FAdo system for manipulating formal language objects, which is written in the Python script language.
%Specifically, we implement methods to manipulate transducers and we
%go one level higher than existing formal language systems and implement methods to manipulate objects representing
%classes of independent languages (widely known as code properties).
%Our methods allow users to define their own code properties and combine them
%between themselves or with fixed properties such as prefix codes, suffix codes,
%error detecting codes, etc. The satisfaction and maximality decision questions are
%solvable for any of the definable properties. The online system LaSer is also
%improved accordingly including  the added capability to generate
%immediately executable Python code on the client site for deciding the user's desired question. 
Our work is founded on independence theory as well as the theory %and algorithms
of rational relations and  transducers and contributes 
with improveded  algorithms on these objects.
%All implementations for transducers and code properties are open source and  available via
%the FAdo website.

\pbsn
\textbf{Keywords}.
algorithms, automata, codes, FAdo, implementation, language properties, LaSer, maximal, regular languages, transducers, program generation

%%%%%%%%%%%%%%%%%%%%%%%%%%%%%%%%%%%%%%%%%%%%%%%%%%%%%%%%%
\section{Introduction}\label{sec:intro}
Several programming platforms are nowadays available, providing methods to transform and manipulate various formal
language objects: Grail/Grail+\cite{RayWood:1994,grail}, Vaucanson 2~\cite{Vauc,DDLS:2013}, FAdo \cite{FAdo,AAAMR:2009},
OpenFST \cite{OpenFST}, JFLAP \cite{OpenFST}.
Some of these systems allow one to manipulate such objects within
simple script environments. Grail for example,
one of the oldest systems, provides a set of  filters
manipulating automata and regular expressions on a UNIX command shell. Similarly, FAdo provides a set of methods manipulating such
objects on a Python shell~\cite{Python}.

%The general aim of these systems is to provide programming tools  that help  the research %on  automata theory and formal languages.
Software environments for symbolic manipulation of formal languages are widely recognized as important tools for theoretical and practical research. 
 They allow easy   prototyping of new algorithms, testing algorithm performance with large datasets, corroborate or disprove descriptional complexity bounds for manipulations of formal systems representations, etc. A typical example is, for a given operation on regular languages, to find an upper bound for the number states of a minimal deterministic finite automata (DFA) for the language that results from the operation, as a function of the number of states of the minimal DFAs of the operands. Due to the combinatorial nature of formal languages representations this kind of calculations are almost impossible without computational add.

In this work, we extend  the capabilities of FAdo and
LaSer~\cite{DudKon:2012,Laser} by implementing
transducer methods and by going to the higher level of implementing
objects representing classes of independent formal languages, also known as
code properties.
More specifically, the contributions of the present paper are as follows.
%, where all implementations of algorithms are based
%on the FAdo set of Python packages.
\begin{enumerate}
\item
Implementation of transducer objects and state of the art transducer methods. FAdo is a regularly maintained and user friendly
Python package that was lacking transducer objects.
Now available are general transducers
%, transducers in standard and normal forms, as well as deterministic transducers\.
as well as transducers in standard and normal forms.
Some important methods that have
been implemented are various product constructions between transducers and between
transducers and automata, as well as    a  %the fastest known 
transducer functionality test.
\item
Definitions of objects representing code properties and methods for their manipulation,
which to our knowledge is
a new development in software related to formal language objects. In simple mathematical terms, a code property is
a class of languages that is closed for maximal languages.
% such that each language in the class can be embedded in a maximal one in the same class. 
In addition to some fixed known code properties (such as prefix code, suffix code, hypercode), these methods can be used to construct new
code properties, including various error-detecting properties, which are specified either via a trajectory regular expression~\cite{Dom:2004} or via a transducer description~\cite{DudKon:2012}. Moreover, our methods can
be used to combine any defined properties and produce a new property that is the conjunction of the defined properties.
\item
Enhancement and implementation of state of the art decision algorithms for code properties
of  regular languages. In particular, many such algorithms have been implemented and enhanced so
as to provide witnesses (counterexamples) in case of a negative answer,
for example, when the given regular language does not satisfy the property,
or is not maximal with respect to the property. To our knowledge such
implementations are not openly available. In particular, the
satisfaction of the classic property of unique decipherability, or decodability, is implemented for any
given NFA based on the algorithm in~\cite{Head:Weber:decision} as well as
the satisfaction of various error-detecting properties~\cite{Kon:2002}.
\item
A mathematical definition of what it means to simulate (and hence implement) a
hierarchy of properties and the proof that there is no complete simulation
of the set of error-detecting properties.
\item
Generation of executable Python code based on the requested question about a given code property.
This is mostly of use in the online LaSer~\cite{Laser}, which receives client requests and attempts to compute answers. However, as
the algorithm required to compute an answer can take a long time, LaSer provides the option
to compute and return a self-contained executable Python program that can be executed at the client's machine and return the required answer.
\item
All the above classes and methods are open source (GPL)%~\cite{FAdo}
: available for anyone to copy, use and modify to suit their own application.
%\item
%Techniques for testing transducer methods. This topic is interesting as many questions about transducers are undecidable. \textbf{NOTE}: not sure about this.
\end{enumerate}
Our work is founded on independence theory \cite{Shyr:Thierrin:relations,JuKo:handbook}
as well as the theory %and algorithms
of rational relations and  transducers \cite{Be:1979,Sak:2009}.
We present our algorithmic enhancements 
in a detailed mathematical manner with \textit{two aims in mind:}
first to establish the correctness
of the enhanced algorithms and, second to allow interested readers to
obtain a deeper understanding of these algorithms, which could
potentially lead to further developments.

\pssn
The paper is organized as follows.
\begin{description}
  \item[Section 2] contains some basic terminology and background about various formal language concepts as well as a few examples of manipulating FAdo automata in Python.
  \item[Section 3] describes our implementation of transducer object classes and a few basic
    methods involving product constructions and rational operations.
  \item[Section 4]  describes the decision algorithm for transducer functionality, and then our enhancement  so as to provide witnesses when the transducer in question is not functional.
  \item[Section 5] describes our implementation of code property objects and basic methods for their manipulation. Moreover, a mathematical approach to defining syntactic simulations of
  infinite sets of properties is presented, explaining that a linear simulation of all error-detecting
  properties exists, but no complete simulation of these properties is possible.
  \item[Section 6] continues on code property methods by describing our implementation of the satisfaction and maximality methods.
  Again, we describe our enhancements so as to provide witnesses when the answer to the satisfaction/maximality question is negative.
  \item[Section 7] describes the implementation of the unique decodability (or decipherability) property
      and its satisfaction and maximality algorithms. This property is presented separately, as it is a classic one that cannot
      be defined within the methods of transducer properties.
  \item[Section 8] describes the new version of LaSer~\cite{Laser} including
   the capability to generate executable programming code in response to a
   client's request for the satisfaction or maximality of a given code property.
  \item[Section 9] contains a few concluding remarks including directions for future research.
\end{description}

%%%%%%%%%%%%%%%%%%%%%%%%%%%%%%%%%%%%%%%%%%%%%%%%%%%%%%%%%
\section{Terminology and Background}\label{sec:terminology}

\emshort{Sets, alphabets, words, languages.}
We write $\N,\N_0,\Z,\R$ for the sets of natural numbers (not including 0), non-negative
integers, integers, and real numbers, respectively.
If $S$ is a set, then $\card S$ denotes the cardinality of $S$,
and $\pset S$ denotes the set of all subsets of $S$.
An \emdef{alphabet} is a finite nonempty set of symbols. In this paper,
we write $\al,\alD$ for any arbitrary alphabets.
If $q\in\N$, then
$\al_q$ denotes the alphabet $\{0,1,\ldots,q-1\}$.
%A word, or string, over an alphabet $\al$ is a finite sequence of symbols.
The set of all words, or strings, over an alphabet $\al$ is written as
$\al^*$, which includes the \emdef{empty word} $\ew$.
%If $w$ is a nonempty word then $w[0]$ is the \emdef{first symbol} of $w$.
A \emdef{language} (over $\al$) is any set of words.
In the rest of this paragraph, we use the following arbitrary object names: $i,j$ for nonnegative integers, $K,L$ for languages and $u,v,w,x,y$ for words.
If $w\in L$ then we say that $w$ is an \emdef{$L$-word}.
When there is no risk of confusion, we write a singleton language $\{w\}$
simply as $w$. For example,
$L\cup w$ and $v\cup w$  mean $L\cup\{w\}$ and $\{v\}\cup\{w\}$, respectively.
We use standard operations and notation on words and languages
\cite{HopcroftUllman,Wood:theory:of:comput,MaSa:handbook,FLHandbookI}.
For example, $uv$, $w^i$, $KL$, $L^i$, $L^*$, $L^+$ represent respectively, the concatenation of $u$ and $v$, the word consisting of $i$ copies of $w$, the concatenation of $K$ and $L$, the language consisting of all words obtained by
concatenating any $i$ $L$-words, the Kleene star of $L$, and $L^+=L^*\setminus\ew$. If $w$ is of the form $uv$ then $u$ is a
\emdef{prefix} and
$v$ is a \emdef{suffix} of $w$. If $w$ is of the form $uxv$ then $x$ is an
\emdef{infix} of $w$. If $u\not=w$ then $u$ is called a \emdef{proper prefix} of $w$---the definitions of proper suffix and proper infix are similar.
\pmsn
\emshort{Codes, properties, independent languages, maximality.}
A \emdef{property} (over $\al$) is any set $\pty$ of languages.
If $L$ is in $\pty$ then we say that $L$ \emdef{satisfies} $\pty$.
Let $\aleph_0$ denote the cardinality of $\N$.
A \emdef{code property}, or \emdef{independence}, \cite{JuKo:handbook},
is a property $\pty$ for which there is $n\in\N\cup\{\aleph_0\}$
%and a subset $\pty^{<n}$ of $\pty$ consisting of languages of cardinality less than $n$
such that
\[
L\in\pty, \quad\hbox{if and only if} \quad
L'\in\pty,\hbox{ for all $L'\sse L$ with $0<\card{L'}<n$,}
\]
that is, $L$ satisfies the property exactly when all nonempty subsets of $L$ with
less than $n$ elements satisfy the property. 
In this case, we also say that $\pty$ is an $n$-\emdef{independence}.
In the rest of the paper
we only consider properties $\pty$  that are code properties.
A language $L\in\pty$  is called \emdef{$\pty$-maximal}, or a
maximal $\pty$ code, if $L\cup w\notin\pty$ for any word $w\notin L$.
From \cite{JuKo:handbook} we have that every $L$ satisfying $\pty$ is
included in a maximal $\pty$ code. To our knowledge, all known code related properties in the
literature \cite{Ham:1950,Shyr:book,JuKo:handbook,SSYu:book,BePeRe:2009,Dom:2004,DudKon:2012,PAF:2013}
are code properties as defined above. For example, consider the `prefix code' property: $L$ is a \emdef{prefix code} if no word in $L$ is a proper prefix of a word in $L$.
This is a code property with $n=3$. Indeed, let $\pty'$ be the set of all singleton languages union
with the set of all languages $\{u,v\}$ such that $u$ is not a proper prefix of $v$ and vice versa. Then, every element in $\pty'$ is a prefix code. Moreover any $L$ is a prefix code if and only if any nonempty subset $L'$ of $L$ with less than three elements is in $\pty'$. As we shall see further below the focus of this work is on 3-independence
properties that can also be viewed as independent with respect to a binary relation in the
sense of \cite{Shyr:Thierrin:relations}.
\pmsn
\emshort{Automata and regular languages \cite{Yu:handbook,Sak:2009}.}
A  nondeterministic finite automaton with empty transitions, for short \emdef{automaton} or \emdef{$\ew$-NFA}, is a quintuple
$$\aut=(Q,\al, T,I, F)$$
such that $Q$ is the set of states, $\al$ is an alphabet, $I,F\subseteq Q$
are the sets of start (or initial) states and final states, respectively, and $T\subseteq Q\times(\al\cup\ew)\times Q$ is the finite set of \emdef{transitions}. Let $(p,x,q)$ be a transition of $\aut$. Then  $x$ is called the \emdef{label} of the transition, and we say that $p$ has an \emdef{outgoing} transition (with label $x$). A \emdef{path} %, or \emdef{computation}
 of $\aut$ is a finite sequence $(p_0,x_1,p_1,\ldots,x_\ell,p_\ell)$,
for some nonnegative integer $\ell$, such that each triple $(p_{i-1},x_i,p_i)$ is a transition of $\aut$. The word $x_1\cdots x_\ell$ is called the \emdef{label} of the path. The path is called \emdef{accepting} if
$p_0$ is a start state and $p_\ell$ is a final state. The \emdef{language accepted} by $\aut$, denoted as $\lang(\aut)$, is the set of labels of all the accepting paths of $\aut$. The $\ew$-NFA $\aut$ is called
\emdef{trim}, if every state appears in some accepting path of $\aut$.
The automaton $\aut$ is called an \emdef{NFA}, if no transition label is empty, that is, $T\subseteq Q\times\al\times Q$. A deterministic finite automaton, or \emdef{DFA} for short, is a special type of NFA where $I$ is a singleton set and there is no state $p$ having two outgoing transitions with equal labels.
%\pssn
The \emdef{size} $\sz{\aut}$ of the automaton $\aut$ is
$\card{Q}+\card{T}$. The automaton $\aut^{\ew}$ results when we add $\ew$-loops in $\aut$, that is, transitions $(p,\ew,p)$ for all states $p\in Q$.
\pmsn
\emshort{Transducers and (word) relations \cite{Be:1979,Yu:handbook,Sak:2009}.}
A (word) \emdef{relation} over $\al$ and $\alD$ is a subset
of $\al^*\times\alD^*$, that is, a set of pairs $(x,y)$ of words
over the two alphabets (respectively). The \emdef{inverse} of a relation $\rho$,
denoted as $\rho^{-1}$ is the relation $\{(y,x)\mid (x,y)\in \rho\}$.
A (finite) \emdef{transducer} is a sextuple
$$\tr=(Q, \al, \alD, T, I, F)$$
such that $Q,I,F$ are exactly the same as those in $\ew$-NFAs, $\al$ is now called the \
\emdef{input} alphabet, $\alD$ is the \emdef{output} alphabet, and $T\subseteq Q\times\al^*\times\alD^*\times Q$ is the finite set of transitions. We write $(p,x/y,q)$ for a transition -- the \emdef{label} here is $(x/y)$, with $x$ being the input and $y$ being the output label. The concepts of path, accepting path, and trim transducer are similar to those in $\ew$-NFAs. In particular the \emdef{label} of a path $(p_0,x_1/y_1,p_1,\ldots,x_\ell/y_\ell,p_\ell)$ is the pair $(x_1\cdots x_\ell,
y_1\cdots y_\ell)$ consisting of the input and output labels in the path. The \emdef{relation realized}
by the transducer $\tr$, denoted as $\rel(\tr)$, is the set of labels in all the accepting paths of $\tr$.
We write $\tr(x)$ for the set of \emdef{possible outputs of} $\tr$ on input $x$, that is,  $y\in\tr(x)$ iff $(x,y)\in \rel(\tr)$.
The \emdef{domain} of $\tr$ is the set of all words $w$ such that $\tr(w)\not=\emptyset$.
The \emdef{inverse} of a transducer $\tr$, denoted as $\tr^{-1}$, is the transducer that results from $\tr$ by simply switching the input and output alphabets of $\tr$ and also switching the input and output parts of the
labels in the transitions of $\tr$. It follows that $\tr^{-1}$ realizes the inverse of
the relation realized by $\tr$.
The transducer $\tr$ is said to be in \emdef{standard form}, if each transition $(p,x/y,q)$ is such that $x\in(\al\cup\ew)$ and $y\in(\alD\cup\ew)$.
It is in \emdef{normal form} if it is in standard form and exactly one
of $x$ and $y$ is equal to $\ew$.
We note that every transducer is effectively equivalent to one (realizing the same relation, that is) in standard form and one in normal form. %\cite{Be:1979,Sak:2009}.
%\pssn
As in the case of automata, the transducer $\trt^{\ew}$ results when we add $\ew$-loops in $\trt$, that is, transitions $(p,\ew/\ew,p)$ for all states $p\in Q$.
The \emdef{size} of a transition $(p,x/y,q)$ is the number $1+|x|+|y|$. The size $\sz{\tr}$ of the transducer $\tr$ is the sum of the
number of states and sizes of transitions in $T$. If $\trs$ and $\tr$ are
transducers, then there is a transducer $\trs\lor\tr$ of
size $O(|\trs|+|\tr|)$ realizing $\rel(\trs)\cup\rel(\tr)$.
\pmsn
\emshort{Automata and finite languages in FAdo \cite{FAdo}.}
The modules \texttt{fa} for automata, \texttt{fl} for finite languages, and \texttt{fio} for input/output of formal language
 objects, can be imported in a standard Python manner as follows.
\begin{verbatim}
import FAdo.fl as fl
import FAdo.fio as fio
from FAdo.fa import *    # import all fa methods for readability
\end{verbatim}
The FAdo object classes \texttt{FL}, \texttt{DFA} and \texttt{NFA}
manipulate finite languages, DFAs and $\ew$-NFAs, respectively.
\begin{EX}
The following code builds a finite language object \verb1L1 from a list of strings,
and then builds an NFA object \aai accepting the language
$\{a, ab, aab\}$.
\begin{verbatim}
lst = ['a', 'ab', 'aab']
L = fl.FL(lst)
a = L.toNFA()
\end{verbatim}
The second line uses the class name \verb1FL1 to create the finite language \verb1L1 from the given list of strings.
The last line returns an NFA accepting the language \verb1L1.
\end{EX}
\pbsn
The following code reads an automaton, or transducer, \aai from a file, and then writes
 it into a file.
\begin{verbatim}
a = fio.readOneFromFile(filename1)
 ...
fio.saveToFile(filename2, a)
\end{verbatim}
These methods assume that automata and transducers are written into those files in
FAdo format---see examples below and further ones
in \cite{FAdo,Laser}.
We can also read an automaton or transducer from a string \verb1s1 using the function \verb1readOneFromString(s)1.
\begin{EX}\label{exNFA1}
The following code defines a string  that contains an automaton accepting $a^*b$ and then uses that string to define an \verb1NFA1 object.
\begin{verbatim}
    st = '@NFA 1 * 0\n0 a 0\n0 b 1\n'
    a = fio.readOneFromString(st)
\end{verbatim}
As usual, the pattern \verb1\n1 denotes the \emdef{end of line character}, so the string \verb1st1 consists of three lines:
the first indicates the type of object followed by the final states (in this case 1) and the start states after \verb1*1 (in this case 0); the second line contains the
transition \verb#0 a 0#; and the third line contains the transition \verb#0 b 1#. The string has to end with a \verb1\n1.
\end{EX}
\pnsn
Next we list a few useful methods for \verb1NFA1 objects.
We assume that \texttt{a} and
\texttt{b} are automata,  and \texttt{w} is a string (a word).
\pbsn
\texttt{a.evalWordP(w)}:  returns \texttt{True} or \texttt{False}, depending on whether the automaton \texttt{a} accepts \texttt{w}.
\pssn
\texttt{a.emptyP()}: returns \texttt{True} or \texttt{False}, depending on whether the language accepted by $\texttt{a}$ is empty.
\pssn
\texttt{a \& b}: returns an NFA accepting $\lang(\texttt{a})\cap\lang(\texttt{b})$. Both
$\texttt{a}$ and $\texttt{b}$ must be NFAs with no $\ew$-transitions.
\pssn
\texttt{a.elimEpsilon()}: alters $\texttt{a}$ to an equivalent NFA with no $\ew$-transitions.
\pssn
\texttt{a.epsilonP()}: returns \texttt{True} or \texttt{False},  depending on whether
$\texttt{a}$ has any $\ew$-transitions.
\pssn
\texttt{a.makePNG(fileName='xyz')}: creates a file \verb1xyz.jpg1 containing a picture of the automaton (or transducer) $\texttt{a}$.
\pssn
\texttt{a.enumNFA(n)}: returns the set of words of length up to \texttt{n} that are accepted by the automaton $\texttt{a}$.
\begin{EX}
The following example shows a naive implementation of \verb1a.evalWordP(w)1
\begin{verbatim}
   b = fl.FL([w]).toNFA()
   c = a & b
   return not c.emptyP()
\end{verbatim}
One verifies that \texttt{w} is accepted by $\texttt{a}$ if and only if $\texttt{a}$ intersected
with the automaton accepting \{\texttt{w}\} accepts something.
\end{EX}

%%%%%%%%%%%%%%%%%%%%%%%%%%%%%%%%%%%%%%%%%%%%%%%%%%%%%%%%%
\section{Transducer Object Classes and Methods}\label{sec:transducers}
In this section we discuss some aspects of the implementation
of transducer objects and then continue with several important methods,
which we divide into two
subsections: product constructions and rational operations.
We discuss the method for testing
functionality in  Section~\ref{sec:nonfunc}. The module containing all that is discussed in this and the next section is called \verb1transducers.py1. We can import all transducer methods as
follows.
\begin{verbatim}
from FAdo.transducers import *
\end{verbatim}
\pssn
\emshort{Transducer objects and basic methods}
\pssn
From a mathematical point of view, a \emdef{Python dictionary} is an onto function
$\texttt{delta}:D\to R$
between two finite sets of values $D$ and $R$. One writes \verb1delta[x]1 for
the image of \texttt{delta} on \verb1x1. One can define completely the dictionary
using an expression of the form
\[
\texttt{delta = \{d1:r1, d2:r2,}\>\>\ldots\>\>\> \texttt{, dN:rN\}}
\]
which defines \verb1delta[d1$i$\verb1] = r1$i$, for all $i=1,\ldots,\mathrm{N}$. In FAdo,
the class \texttt{GFT}, for General Form Transducer, is a subclass of \texttt{NFA}.
A transducer $\trt=(Q,\al,\alD,T,I,F)$ is implemented as an object
\tti with six instance variables
\pmsi \texttt{States, Sigma, Output, delta, Initial, Final}
\pmsn
corresponding to the six components of $\trt$.
Specifically, \texttt{States} is a list of unique state names, meaning
that each state name has an index which is the position of the state name
in the list, with 0 being the first index value. The variables
\texttt{Sigma, Output, Initial} and \texttt{Final} are sets, where the latter two are
sets of state indexes.  For 
efficiency reasons, the set of transitions $T$ is implemented as a
dictionary
\pmsi   \texttt{delta:} $\{0,\ldots,n-1\} \to$ (\texttt{Sigma} $\to$ $2^{\texttt{Output} \times\{0,...,n-1\}}$),
\pssn
where $n$ is the number of states.
Thus, for any $p\in\{0,\ldots,n-1\} $, \texttt{delta[$p$]} is a  dictionary, and for any input label $x$,
 \texttt{delta[$p$][$x$]} is a set of pairs $(y,q)$ corresponding to all transitions $\{(p,x/y,q)\in T\mid y\in \texttt{Output},\> q
 \hbox{ is a state index}\}$.
\pssn
Standard form transducers are objects of the FAdo class
\texttt{SFT}, which is a subclass of \texttt{GFT}. The class \texttt{SFT} is very important from an algorithmic point
of view, as most product constructions require a transducer to be in standard form.
The conversion from GFT to SFT is done using
\pmsi
\verb1   s = t.toSFT()1
\pssn
which assigns to \tsi a new SFT object equivalent to \ttin. The implementation of Normal Form Transducers is via the FAdo class \texttt{NFT}. This form of transducers is convenient in proving mathematical statements about transducers~\cite{Sak:2009}.
\begin{EX}\label{exSuffixTran}
The following code defines a string \verb1s1 that contains a transducer
description and then constructs an SFT transducer  from that string.
The transducer, on input $x$, returns the set of all proper suffixes
of $x$---see also Fig~\ref{fig:suffix}. It has an initial state 0 and a final state 1, and deletes
at least one of the input symbols so that what gets outputted at state 1 is
a proper suffix of the input word.
\begin{verbatim}
    s = '@Transducer 1 * 0\n'\
        '0 a @epsilon 0\n'\
        '0 b @epsilon 0\n'\
        '0 a @epsilon 1\n'\
        '0 b @epsilon 1\n'\
        '1 a a 1\n'\
        '1 b b 1\n'
    t = fio.readOneFromString(s)
\end{verbatim}
\end{EX}
\begin{figure}[ht!]
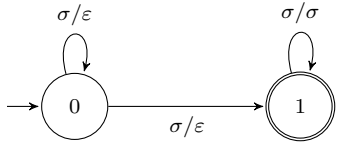

\begin{transducer}
	\node [state,initial] (q0) {$0$};
%	\node [node distance=1cm,left=of q0,anchor=east] {$\trt_{\rm pr}\colon$};
	\node [state,accepting,right of=q0] (q1) {$1$};
%	\node [state,initial,accepting,right of=q1] (q2) {$0$};
%	\node [node distance=1cm,left=of q2,anchor=east] {$\trt_{\rm 1sd}\colon$};
%	\node [state, node distance=4.5cm,accepting,right of=q2] (q3) {$1$};
% \node [node distance=1.5cm,left of=q0] {$\trt_m\colon{}$};
	\path (q0) edge [loop above] node [above] {$\sigma/\ew$} ()
		(q0) edge node [below] {$\sigma/\ew$} (q1)
		(q1) edge [loop above] node [above] {$\sigma/\sigma$} ();
%		(q2) edge node [below] {$(a,b)$} (q3)
%		(q2) edge [loop above] node [above] {$(a,a)$} ()
%		(q3) edge [loop above] node [above] {$(a,a)$} ();
\end{transducer}
\begin{center}
\parbox{0.85\textwidth}{\caption{On input $x$ the transducer shown in the figure
outputs any proper suffix of $x$.
{Note}: In this and the following transducer figures, the input and output alphabets are
equal. An arrow
with label $\sigma/\sigma$ represents a set of transitions with labels $\sigma/\sigma$, for
all alphabet symbols $\sigma$;
and similarly for an arrow with label  $\sigma/\e$. An arrow with label $\sigma/\sigma'$
represents
a set of transitions with labels $\sigma/\sigma'$ for all distinct alphabet  symbols
$\sigma,\sigma'$.}
\label{fig:suffix}
}
\end{center}
\end{figure}
Recall, for a transducer $\trt$ and word $w$, $\trt(w)$ is the set of possible outputs of $\trt$ on input $w$. Note that this set could be empty, finite, or even infinite. In any case, it is always a regular language. The FAdo
method \texttt{t.runOnWord(w)} assumes that \tti is an SFT object and returns an automaton accepting the language \ttin(w).
\begin{EX}
The following code is a continuation of Example~\ref{exSuffixTran}.
It prints the set of all proper suffixes of the word \texttt{ababb}.
\begin{verbatim}
    a = t.runOnWord('ababb')
    n = len('ababb')
    print a.enumNFA(n)
\end{verbatim}
\end{EX}
\pssn
Assuming again that \tti is an SFT object, we have the following methods.
\pssn
\texttt{t.inverse()}: returns the inverse of the transducer \ttin.
\pssn
\texttt{t.evalWordP((w,w'))}: returns \texttt{True} or  \texttt{False}, depending whether the pair \texttt{(w,w')} belongs to the relation realized by \ttin.
\pssn
\texttt{t.nonEmptyW()}: returns some word pair (\texttt{u, v}) which belongs  to the relation realized
by \ttin, if nonempty. Else, it returns the pair (\texttt{None, None}).
\pssn
\texttt{t.toInNFA()}: returns the NFA that results if we remove the output
alphabet and the output labels of the transitions in \ttin.
\pssn
\texttt{t.toOutNFA()}: returns the NFA that results if we remove the input
alphabet and the input labels of the transitions in \ttin.
%
%%%%%%%%%%%
\pbsn
\emshort{Product constructions \cite{Be:1979,Yu:handbook,Kon:2002}}
\pssn
The following methods are available in FAdo. They are adaptations of the
standard product construction \cite{HopcroftUllman} between two NFAs which produces an
NFA with transitions $((p_1,p_2),\sigma,(q_1,q_2))$,
where $(p_1,\sigma,q_1)$ and $(p_2,\sigma,q_2)$ are transitions of the two NFAs, such that the new NFA accepts the intersection of the corresponding languages. We assume that \tti and \tsi are SFT objects and \aai is an NFA
object.
\pbsn
\texttt{t.inIntersection(a)}: returns a transducer realizing all word pairs $(x,y)$ such that $x$ is accepted by \aai and $(x,y)$ is realized by \ttin.
\pssn
\texttt{t.outIntersection(a)}: returns a transducer realizing all word pairs $(x,y)$ such that $y$ is accepted by \aai and $(x,y)$ is realized by \ttin.
\pssn
\texttt{t.runOnNFA(a)}: returns the automaton accepting the language
\[
\bigcup_{w\in\lang(\aut)}\tr(w).
\]
\pssn
\texttt{t.composition(s)}: returns a transducer realizing the composition $\rel(\ttin)\circ\rel(\tsin)$
of the relations realized by the two transducers.
\pmsn
To make the presentation a little more concrete for interested readers, we comment on one of the above methods, \texttt{t.runOnNFA(a)}. The construction first considers whether \tti has any $\ew$-input transitions, that is transitions
with labels $\ew/y$. If yes, then a copy \abi of  \aai is made with $\ew$-loops added, that is,
transitions $(p,\ew,p)$ for all states $p$ in \aain. Then, if \aai has any $\ew$-transitions then a copy \tsi of \tti is made with $\ew$-loops added,
that is, transitions $(q,\ew/\ew,q)$ for all states $q$ in \ttin.
Then, the actual product construction is carried out:
if $(p,x,p')$ and $(q,x/y,q')$ are transitions of \abi and \tsin, respectively, then
$((p,q),y,(p',q'))$ is a transition of the resulting automaton.
\begin{EX}
Continuing Examples \ref{exNFA1} and \ref{exSuffixTran}, the following code
constructs an NFA object \abi accepting every word that is a proper suffix
of some word in $a^*b$. It then enters a loop that prints whether a given
string $w$ is a suffix of some word in $a^*b$.
\begin{verbatim}
    b = t.runOnNFA(a)
    while True:
        w = raw_input()
        if 'a' not in w and 'b' not in w: break
        print b.evalWordP(w)
\end{verbatim}
\end{EX}
%
%%%%%%%%%%%%%%%%%%%%%%%
\pmsn
\emshort{Rational operations \cite{Be:1979}}\pssn
A relation $\rho$ is a \emdef{rational relation}, if it is equal to $\emptyset$, or $\{(x,y)\}$
for some words $x$ and $y$, or can be obtained from other ones by using a finite number of times
any of the three (rational) operators: union, concatenation, Kleene star.
A classic result on transducers says that a relation is rational if and only if it can be realized by a transducer.
The following methods are now available in FAdo, where we assume that \tsi and \tti are SFT transducers.
\pmsn
\texttt{t.union(s):} returns a transducer realizing the union of the relations realized by \tsi and \ttin.
\pmsn
\texttt{t.concat(s):} returns a transducer realizing the concatenation of the relations realized by \tsi and~\ttin.
\pmsn
\texttt{t.star(flag=False):} returns a transducer realizing the Kleene star of the relation realized by \ttin, assuming the argument is missing or is \texttt{False}. Else it returns a transducer realizing $(\rel(\ttin))^+$.
\pmsn
The implementation of the above methods mimics the implementation
of the corresponding methods on automata.
%
%
%%%%%%%%%%%%%%%%%%%%%%%%%%%%%%%%%%%%%%%%%%%%%%%%%%%%%%%%%%%%%
\section{Witness of Transducer \emph{non}-functionality}\label{sec:nonfunc}
%\pssn
A transducer $\trt$ is called \emdef{functional} if, for every word $w$, the
set $\trt(w)$ is either empty or a singleton. A triple of words
$(w,z,z')$ is called a \emdef{witness of $\trt$'s non-functionality},
if $z\not=z'$ and $z,z'\in\trt(w)$.
In this section we present the SFT method \texttt{t.nonFunctionalW()},
which returns a witness of \ttin's non-functionality, or the triple (\texttt{None,None,None}) if \tti is functional.
The method is an adaptation of the
decision algorithms in \cite{AllMoh:2003,BeCaPrSa2003} that return whether
a given transducer in standard form is functional.
Although there are
some differences in the two algorithms, we believe that conceptually
the algorithmic technique is the same.
We first describe that algorithmic technique following the
presentation in \cite{BeCaPrSa2003}, and then we modify it in order to produce the method \texttt{t.nonFunctionalW()}. We also note that, using a careful implementation and assuming fixed alphabets, the time 
complexity of the decision
algorithm can be  quadratic with respect to the size of the transducer---see \cite{AllMoh:2003}.
\pssn
Given a standard form transducer $\trt=(Q,\al,\alD,T,I,F)$, the first phase is to construct the \emdef{square machine} $\tru$, which is defined by the following process.
\pmsi
\emshort{Phase 1}
\begin{enumerate}
  \item First define an automaton $\tru'$ as follows: states $Q\times Q$,  initial states $I\times I$, and final states  $F\times F$.
  \item If $\trt$ contains $\ew$-input transitions, that is, transitions with labels of the form $\ew/u$ then we let $\trt$ be $\trt^{\ew}$. The transitions of $\tru'$ are all the triples
    $$((p,p'),(x,x'),(q,q'))$$
    such that $(p,v/x,q)$ and $(p',v/x',q')$ are transitions of $\trt$.
  \item Return $\tru$ = a trim version of $\tru'$.
\end{enumerate}
Note that any accepting path of $\tru$ has a label
$(x_1,x_1')\cdots(x_n,x_n')$ such that the words $x_1\cdots x_n$ and $x_1'\cdots x_n'$  are outputs (possibly equal) of $\trt$ on
the \textit{same} input word. The next phase is to perform a process that starts from the
initial states and assigns a \emdef{delay value} to each state, which is either ZERO or a pair of words in $\{(\ew,\ew),(\ew,u),(u,\ew)\}$, with $u$ being nonempty. A delay $(y,y')$ on a state $(p,p')$ indicates that there is a  path in $\tru$ from $I\times I$ to $(p,p')$ whose label
is a word pair of the form $(fy,fy')$. This means that there is an input word that can take the transducer $\trt$ to state $p$ with output $fy$
and also to state $p'$ with output $fy'$. A delay ZERO at $(p,p')$
means that there is an input word that can take $\trt$ to state $p$
with output of the form $f\sigma g$ and to state $p'$ with output
of the form $f\sigma'g'$, where $\sigma$ and $\sigma'$ are distinct
alphabet symbols.
\pmsi
\emshort{Phase 2}
\begin{enumerate}
  \setcounter{enumi}{3}
  \item Assign to each initial state the delay value $(\ew,\ew)$.
  \item Starting from the initial states, visit all transitions in breadth-first search mode such that, if $(p,p')$ has the delay value $(y,y')$ and a transition $((p,p'),(x,x'),(q,q'))$ is visited, then the state $(q,q')$ gets a delay value $D$ as follows:
      \begin{itemize}
        \item If $y'x'$ is of the form $yxu$ then $D=(\ew,u)$.
        If $yx$ is of the form $y'x'u$ then $D=(u,\ew)$.
        If $y'x'=yx$ then $D=(\ew,\ew)$.
        Else, $D$ = ZERO.
      \end{itemize}
  \item The above process stops when a delay value is ZERO, or a state gets two different delay values, or every state gets one delay value.
  \item  If every state has one delay value and every final state has the delay value $(\ew,\ew)$ then return \true (the transducer is functional). Else, return \falsen.
\end{enumerate}
\pnsn
Next we present our witness version of the transducer functionality
algorithm. First, the square machine $\tru$ is revised such that
its transitions are of the form
      $$((p,p'),(v,x,x'),(q,q')),$$
that is, we now record in $\tru$ information about the common input $v$ (see Step~2 in Phase~1). Then, to each state $(q,q')$ we assign not only a
delay value but also a path value $(\alpha,\beta,\beta')$ which means
that, on input $\alpha$, the
transducer $\trt$ can reach state $q$ with output $\beta$ and also state
$q'$ with output $\beta'$---see Fig.~\ref{fig:Functionality}.
\begin{figure}[ht!]
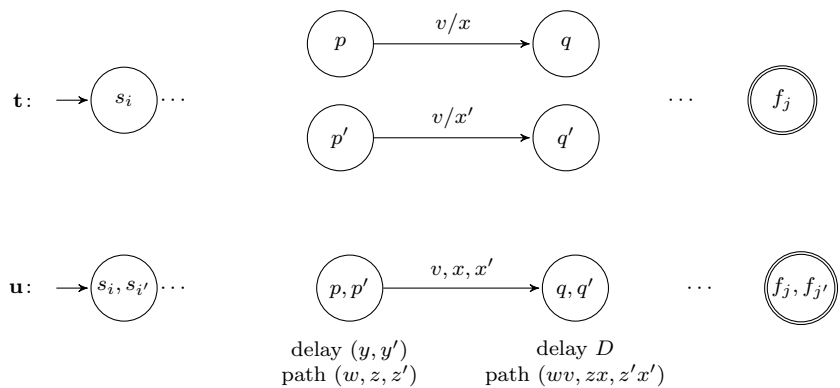

\begin{transducer}
	\node [node distance=0.75cm,above of=q0,anchor=east](refl) {};
	\node [state,initial] (q0) {$s_i$};
	\node [node distance=1cm,left=of q0,anchor=east] {$\trt\colon$};
	\node [state,right of=refl] (q1) {$p$};
	\node [state,right of=q1] (q2) {$q$};
	\node [node distance=1cm,right=of q0,anchor=east] {$\cdots$};
	\node [node distance=1.25cm,state,below of=q1] (q4) {$p'$};
	\node [state,right of=q4] (q5) {$q'$};
	\node [node distance=0.75cm,below of=q2,anchor=east](refr) {};
	\node [state,accepting,right of=refr] (q3) {$f_j$};
	\node [node distance=1cm,left=of q3,anchor=east] {$\cdots$};
%    \node [node distance=1cm,left of=q0] {$\trt_m\colon{}$};
	\path %(q0) edge [loop above] node [above] {$\sigma/\sigma$} ()
		%(q0) edge node [above] {$\sigma/\sigma'$} (q1)
		%(q1) edge [loop above] node [above] {$\sigma/\sigma$} ()
		(q1) edge node [above] {$v/x$} (q2)
		(q4) edge node [above] {$v/x'$} (q5);
		%(q2) edge node [below] {$\ew/\sigma$} (q3)
		%(q2) edge [loop above] node [above] {$\sigma/\sigma$} ()
		%(q3) edge [loop above] node [above] {$\sigma/\sigma$} ();
	\node [node distance=2.5cm,state,initial,below of=q0] (p0) {$s_i,s_{i'}$};
	\node [node distance=1cm,left=of p0,anchor=east] {$\tru\colon$};
	\node [state,right of=p0] (p1) {$p,p'$};
	\node [state,right of=p1] (p2) {$q,q'$};
	\node [node distance=1cm,right=of p0,anchor=east] {$\cdots$};
	\node [state,accepting,right of=p2] (p3) {$f_j,f_{j'}$};
	\node [node distance=1cm,left=of p3,anchor=east] {$\cdots$};
	\path  (p1) edge node [above] {$v,x,x'$} (p2);
	\node [node distance=1cm,below of=p1](dell) {delay $(y,y')$\\path $(w,z,z')$};
	\node [node distance=1cm,below of=p2](delr) {delay $D$\\path $(wv,zx,z'x')$};
\end{transducer}
\begin{center}
\parbox{0.85\textwidth}{\caption{The figure shows the structure of the (revised) square machine $\tru$ corresponding to the
given transducer $\trt$.  The delay and path values for the states of $\tru$
are explained in Definition~\ref{def1}}
\label{fig:Functionality}}
\end{center}
\end{figure}
\begin{definition}\label{def1}
Let $(q,q')$ be a state of the new square machine $\tru$.  The set of \emdef{delay-path values} of $(q,q')$ is defined as follows.
\begin{itemize}
  \item If $(q,q')$ is an initial state then
  $((\ew,\ew),(\ew,\ew,\ew))$ is a delay-path value of $(q,q')$.
  \item If $((p,p'),(v,x,x'),(q,q'))$ is a transition in $\tru$ and $(p,p')$ has a delay-path value $(C,(w,z,z'))$, then $(D,(wv,zx,z'x'))$ is a delay-path value of $(q,q')$, where $D$ is defined as follows.
      \begin{enumerate}
        \item If $C=(y,y')\not=$  ZERO and $y'x'$ is of the form $yxu$ then $D=(\ew,u)$.
        \item If $C=(y,y')\not=$  ZERO and $yx$ is of the form $y'x'u$ then $D=(u,\ew)$.
        \item If $C=(y,y')\not=$  ZERO and $y'x'=yx$ then $D=(\ew,\ew)$
        \item Else, $D$ = ZERO.
      \end{enumerate}
\end{itemize}
For $(q,q')$, we also define a \emdef{suffix triple} $(w_{qq'},z_{qq'},z_{qq'}')$
to be the label of any path from $(q,q')$ to a final state of $\tru$.
\end{definition}
\begin{remark}\label{rem1}
The above definition implies that if a state $(p,p')$ has a
delay-path value $(C,(w,z,z'))$, then there is a path in $\tru$
whose label is $(w,z,z')$. Moreover, by the definition of $\tru$, the transducer $\trt$ on input $w$ can reach state $p$ with output $z$ and also state $p'$ with output $z'$. Thus, if $(p,p')$ is a final state, then
$z,z'\in\trt(w)$.
\end{remark}
\pssi
\emshort{Algorithm nonFunctionalW}
\begin{enumerate}
\item Define function \texttt{completePath}($q,q'$) that follows a shortest path
   from $(q,q')$ to a final state of $\tru$ and returns a suffix triple (see Definition~\ref{def1}).
  \item Construct the revised square machine $\tru$, as in Phase 1 above but now use transitions of the form
      $$((p,p'),(v,x,x'),(q,q')),$$
      (see step~2 in Phase~1).
  \item Assign to each initial state
      the delay-path value $((\ew,\ew),(\ew,\ew,\ew))$.
  \item Starting from the initial states, visit all transitions in breadth-first search mode. If $(p,p')$ has  delay-path value $((y,y'),(w,z,z'))$, and a transition $((p,p'),(v,x,x'),(q,q'))$ is visited,
      then compute the delay value $D$ of $(q,q')$ as in steps 1--4 of Definition~\ref{def1}, and
      let $R=(wv,zx,z'x')$. Then,
      \begin{enumerate}
        \item if $D$ is ZERO,  then invoke
          \texttt{completePath}$\,(q,q')$ to get a suffix triple $(w_{qq'},z_{qq'},z_{qq'}')$
         and \texttt{return} $(wvw_{qq'},zxz_{qq'},z'x'z_{qq'}')$.
        \item if $(q,q')$ is final and $D\not=(\ew,\ew)$,
          \texttt{return} $(wv,zx,z'x')$.
        \item if $(q,q')$ already has a delay value $\not=D$ and, hence, a path value $P=(w_1,z_1,z_1')$, then invoke
          \texttt{completePath}$\,(q,q')$ to get a suffix triple $(w_{qq'},z_{qq'},z_{qq'}')$.
          Then,
          \begin{itemize}
          \item
           If $zxz_{qq'}\not=z'x'z_{qq'}'$ \texttt{return} $(wvw_{qq'}, zxz_{qq'},z'x'z_{qq'}')$.
           \item
           Else \texttt{return} $(w_1w_{qq'},z_1z_{qq'},z_1'z_{qq'}')$.
           \end{itemize}
        \item else assign $(D,R)$ to $(q,q')$ as delay-path value and continue the breadth-first process.
      \end{enumerate}
\item At this point \texttt{return (None,None,None)}, as the breadth-first process has been completed.
\end{enumerate}
\emph{Terminology.} Let $A=(w_1,\ldots,w_k)$ be a tuple consisting of words. 
The \emdef{size} $\sz A$ of $A$ is the number $\sum_{i=1}^k\sz{w_i}+(k-1)$.
For example, $\sz{(0,01,10)}=7$. 
If $\{A_i\}$ is any set of word tuples then a \emdef{minimal} element
(of that set) is any $A_i$
whose size is the minimum of $\{\sz{A_i}\}$.
\begin{theorem}\label{resNonFunc}
If algorithm {\rm\texttt{nonFunctionalW}} is given as input a standard form transducer $\trt$, then it returns
either a size $O(|\trt|^2)$ witness of $\trt$'s non-functionality, or
the triple (\texttt{None,None,None}) if $\trt$ is functional.
\end{theorem}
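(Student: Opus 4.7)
The plan is to establish three properties of \texttt{nonFunctionalW}: (i) \emph{soundness}---every returned non-\texttt{None} triple is genuinely a witness of $\trt$'s non-functionality; (ii) \emph{completeness}---the algorithm returns \texttt{(None,None,None)} only when $\trt$ is functional; and (iii) the $O(|\trt|^2)$ size bound. For (ii) I will appeal to the correctness of the underlying decision algorithm of~\cite{AllMoh:2003,BeCaPrSa2003}: that algorithm declares $\trt$ non-functional precisely when the BFS over the square machine $\tru$ encounters one of three exceptional situations---a newly computed delay equal to ZERO, a final state with delay different from $(\ew,\ew)$, or a state re-visited with a different delay. Conditions (a), (b), (c) of our algorithm trigger on exactly these events, so whenever $\trt$ is non-functional the BFS cannot complete without returning a triple.

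For (i), the key invariant is Remark~\ref{rem1}: a delay-path value $(C,(w,z,z'))$ at $(p,p')$ certifies that on input $w$ the transducer $\trt$ reaches $p$ with output $z$ and $p'$ with output $z'$; the invariant is preserved when we extend by a transition $((p,p'),(v,x,x'),(q,q'))$, giving input $wv$ and outputs $zx,z'x'$. In case~(a), $D$ is ZERO, which by Definition~\ref{def1} means that $yx$ and $y'x'$ are neither equal nor a prefix of one another; unwinding the delay structure shows that $zx$ and $z'x'$ disagree on some interior symbol, and appending a common-input suffix triple $(w_{qq'},z_{qq'},z_{qq'}')$ obtained from \texttt{completePath} preserves the mismatch while landing at a final state of $\tru$. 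In case~(b), $(q,q')$ is already final and $D\ne(\ew,\ew)$, so $zx$ and $z'x'$ differ and both belong to $\trt(wv)$.

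The main obstacle is case~(c), where $(q,q')$ already carries a delay-path value with delay $C_1$ different from the freshly computed $D$, and we must show that at least one of the two return branches yields mismatched outputs. Neither $C_1$ nor $D$ is ZERO (else case~(a) would have fired), so each lies in $\{(\ew,\ew),(\ew,u),(u,\ew)\}$ with $u$ nonempty. If $zxz_{qq'}\ne z'x'z_{qq'}'$, the first branch is a witness. Otherwise the equality, together with the shape of $D$, forces a relation of the form $z_{qq'}=u\,z_{qq'}'$ or $u\,z_{qq'}=z_{qq'}'$ for some nonempty $u$; plugging this into the old pair $z_1z_{qq'}$ vs.\ $z_1'z_{qq'}'$, whose ``debt'' shape is prescribed by $C_1\ne D$, yields a relation that would force a nonempty word to equal $\ew$. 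Hence $z_1z_{qq'}\ne z_1'z_{qq'}'$ and the second branch returns a witness. The verification is a short case split over the (at most) nine combinations of delay shapes and is the only technical step.

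For (iii), the revised square machine $\tru$ has $|Q|^2$ states, and its transitions arise from pairs of transitions of $\trt^{\ew}$ that share a common input label $v$, giving $O(|\trt|^2)$ transitions whose labels $(v,x,x')$ have constant size because $\trt$ is in standard form. The path value maintained at any visited state of $\tru$ is the label of a BFS-tree path, hence of length $O(|\trt|^2)$, and the suffix triple produced by \texttt{completePath} is the label of a shortest path to a final state, also of length $O(|\trt|^2)$. Concatenating these yields each component of the returned triple of size $O(|\trt|^2)$, establishing the claimed bound.
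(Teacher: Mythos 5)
Your proposal is correct and follows essentially the same route as the paper's proof: reduce completeness to the correctness of the underlying decision algorithm, establish soundness of the three return branches via the invariant that a delay-path value $((s,s'),(\alpha,\beta,\beta'))$ satisfies $\beta=hs$, $\beta'=hs'$ for a common prefix $h$ (the paper isolates this as a separate lemma), and bound the witness size by the label lengths of the BFS path and the shortest completing path in $\tru$. The paper resolves your case (c) a bit more compactly---from $sz_{qq'}=s'z'_{qq'}$ and $s_1z_{qq'}=s_1'z'_{qq'}$ it derives $(s,s')=(s_1,s_1')$ by a two-way case split on whether $z_{qq'}$ and $z'_{qq'}$ are equal or one is a suffix extension of the other---rather than your nine-combination enumeration, but the underlying contradiction is the same.
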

Before we proceed with the proof of the above result, we note that 
there is a sequence $(\trt_i)$ of non-functional transducers 
such that
$\sz{\trt_i}\to\infty$ and any minimal witness of $\trt_i$'s non-functionality
is of size $\Theta(|\trt_i|^2)$. 
Indeed, let $(p_i)$ be the sequence of primes in increasing order and
consider the transducer
$\trt_i$ shown in Fig.~\ref{fig:tr:seq}. It has 
size $\Theta(p_i)$ and every output word $w$ of $\trt_i$ has length equal to that of the input used to get $w$. 
The relation realized by $\trt_i$ is
$$
\{(0^{mp_i},0^{mp_i}),\,(0^{n(p_i+1)},10^{n(p_i+1)-1})\>:\>m,n\in\N\}.
$$
Any minimal witness of $\trt_i$'s non-functionality is of the form 
$w_i=(0^{mp_i},0^{mp_i},10^{n(p_i+1)-1})$ such that $mp_i=n(p_i+1)$.
Using standard facts from number theory, we have that $n\ge p_i$.
Hence, $\sz{w_i}\ge2+3\times p_i(p_i+1)$, that is,
$\sz{w_i}=\Theta(\sz{\trt_i}^2)$.%
\begin{figure}[ht!]
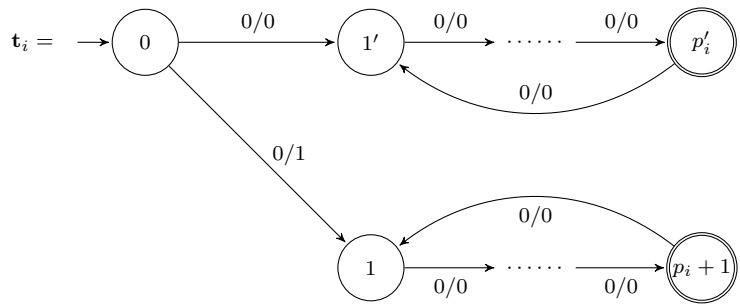

\begin{transducer}
	\node [state,initial] (q0) {$0$};
	\node [node distance=1.5cm,left of= q0] (name) {$\trt_i=$};
	\node [state, right of=q0] (q1) {$1'$};
	\node [node distance=2.20cm,right of=q1] (q2){$\cdots\cdots$};
	\node [state,accepting,node distance=2.20cm,right of=q2] (q3) {$p_i'$};
	
	\node [state,below of= q1] (q1b) {$1$};
	\node [below of=q2] (q2b) {$\cdots\cdots$};
	\node [state,accepting,below of=q3] (q3b) {$p_i+1$};
	\path 
	   (q0) edge node [above] {$0/0$} (q1)
	   (q0) edge node [right] {$\>0/1$} (q1b)
	   (q1)  edge node [above] {$0/0$} (q2)
	   (q1b) edge node [below] {$0/0$} (q2b)
	   (q2)  edge node [above] {$0/0$} (q3)
	   (q2b) edge node [below] {$0/0$} (q3b)   
	   (q3)  edge [bend left=38] node [above] {$0/0$} (q1)  
	   (q3b) edge [bend right=38] node [below] {$0/0$} (q1b)  
     ;
\end{transducer}
\begin{center}
%\parbox{0.85\textwidth}{
\caption{
Transducers with quadratic size minimal witnesses of non-functionality.
%}
\label{fig:tr:seq}}
\end{center}
\end{figure}
\pssi
The following lemma is useful for establishing the correctness of the algorithm
\texttt{nonFunctionalW}.
\begin{lemma}
If a state $(q,q')$ has a delay-path value $((s,s'),(\alpha,\beta,\beta'))$
then
there is a word $h$ such that $\beta=hs$ and $\beta'=hs'$.
\end{lemma}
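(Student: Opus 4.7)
The plan is to prove this by induction on the number of transitions used in building up the delay-path value of $(q,q')$, following the recursive structure in Definition~\ref{def1}. The lemma captures the intuition that the delay $(s,s')$ is precisely the ``excess'' portion of the two outputs $\beta,\beta'$ after stripping off their longest common prefix $h$; so one expects the two outputs to be of the form $(hs, hs')$ whenever the delay is not ZERO.

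For the base case, if $(q,q')$ is an initial state, its delay-path value is $((\ew,\ew),(\ew,\ew,\ew))$, and we take $h=\ew$, giving $\beta=\ew=\ew\cdot s$ and $\beta'=\ew=\ew\cdot s'$ trivially.

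For the inductive step, suppose $((s,s'),(\alpha,\beta,\beta'))$ was obtained from a predecessor $(p,p')$ with delay-path value $((y,y'),(w,z,z'))$ via a transition $((p,p'),(v,x,x'),(q,q'))$, so that $\alpha=wv$, $\beta=zx$, $\beta'=z'x'$, and $(s,s')=D$ is the new delay. By induction there exists $h_0$ with $z=h_0 y$ and $z'=h_0 y'$. Since the hypothesis of the lemma requires the delay to be a word pair, we may assume $D\ne\text{ZERO}$, so exactly one of the three subcases of Definition~\ref{def1} applies. I would then check each subcase in turn:
\begin{itemize}
  \item If $y'x'=yxu$ and $D=(\ew,u)$, take $h=h_0 y x$; then $\beta=h_0 yx=h$ and $\beta'=h_0 y'x'=h_0 yxu = hu$.
  \item If $yx=y'x'u$ and $D=(u,\ew)$, take $h=h_0 y'x'$; then $\beta=h_0 yx=hu$ and $\beta'=h_0 y'x'=h$.
  \item If $y'x'=yx$ and $D=(\ew,\ew)$, take $h=h_0 yx=h_0 y'x'=\beta=\beta'$.
\end{itemize}
In each case the chosen $h$ satisfies $\beta=hs$ and $\beta'=hs'$, completing the induction.

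There is essentially no obstacle beyond careful bookkeeping: the three non-ZERO cases in Definition~\ref{def1} have been engineered so that the invariant ``$\beta$ and $\beta'$ share a common prefix whose complements are exactly $(s,s')$'' is preserved under a transition. The only thing to verify in presentation is that the case $D=\text{ZERO}$ is correctly excluded from the statement (which it is, since the lemma conditions on the delay being a word pair $(s,s')$), and that the inductive hypothesis need only be invoked on a predecessor whose delay is likewise not ZERO---which is automatic, because a ZERO delay can never be the value $C$ used to produce a non-ZERO successor $D$ in Definition~\ref{def1}.
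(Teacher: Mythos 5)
Your proof is correct and follows essentially the same route as the paper's: induction on the recursive construction of delay-path values in Definition~\ref{def1}, with the base case at initial states, the observation that a non-ZERO delay $D$ forces the predecessor's delay $C$ to be non-ZERO, and the same choice of common prefix $h$ in each of the three subcases. No substantive differences.
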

\begin{proof}
We use induction based on Definition~\ref{def1}. If the
given delay-path value is
$((\ew,\ew),(\ew,\ew,\ew))$ the statement is true.
Now suppose that there is a transition $((p,p'),(v,x,x'),(q,q'))$ such that the statement is true for state $(p,p')$ (induction hypothesis) and
$((s,s'),(\alpha,\beta,\beta'))$ results from a
delay-path value $(C,(w,z,z'))$ of $(p,p')$.
As $(s,s')\not=$ ZERO then also $C\not=$ ZERO, so $C$ is of the form $(y,y')$
and one of the three cases 1--3 of Definition~\ref{def1} applies.
Moreover, by the induction hypothesis on $(p,p')$ we have
%\pssi
$z=gy$ and $z'=gy'$, for some word $g$, hence, $\beta=gyx$ and $\beta'=gy'x'$.
\pssn
Now we consider the three cases.
If $y'x'=yxu$ then $(s,s')=(\ew,u)$. Also,
for $h=gyx$ we have $\beta=hs$ and $\beta'=hs'$, as required.
If $yx=y'x'u$ then $(s,s')=(u,\ew)$ and one works analogously.
If $yx=y'x'$ then $(s,s')=(\ew,\ew)$. Also, $\beta=\beta'$ and
the statement follows using $h=\beta$.
\end{proof}
\pmsn
\begin{proof} \emph{(of Prop.~\ref{resNonFunc})}
First note that the algorithm returns a triple other than (\texttt{None,None,None})
exactly the {first} time when one of the following occurs (i) a ZERO value
for $D$ is computed, or (ii) a value of $D$ other than $(\ew,\ew)$ is computed for
a final state, or (iii) a value of $D$, other than the existing delay value, of a visited
state is computed. Thus, the algorithm assigns at most one delay value to
each state $(q,q')$. If the algorithm assigns exactly one delay value to each state
and terminates at step~5, then its execution is essentially the same as that of the
decision version of the algorithm, except for the fact that  in the decision
version no path values are computed. Hence, in this case the transducer is
functional and the algorithm correctly returns (\texttt{None,None,None})
in step~5.

In the sequel we assume that the algorithm terminates in one
of the three subcases (a)---(c) of step~4. So
let $(q,q')$ be a state at which the algorithm computes some delay value $D$
and path value $R=(\alpha,\beta,\beta')$---see~step 4.
It is sufficient to show the following statements.
\begin{description}
  \item[S1] If $D$ is ZERO then $(\alpha w_{qq'},\beta z_{qq'},\beta'z_{qq'}')$ is a witness of $\trt$'s non-functionality.
  \item[S2] If $(q,q')$ is final and $D\in\{(\ew,u),(u,\ew)\}$, with $u$ nonempty,  then $(\alpha,\beta,\beta')$ is a witness of $\trt$'s non-functionality.
  \item[S3] If $D$ is of the form $(s,s')$ and $((s_1,s'_1),(\alpha_1,\beta_1,\beta_1'))$ is the
  existing delay-path value of $(q,q')$ with $(s_1,s'_1)\not=(s,s')$, then one of the following triples is a witness of $\trt$'s non-functionality
      $$(\alpha w_{qq'},\beta z_{qq'},\beta'z_{qq'}'), \>\>\>
        (\alpha_1 w_{qq'},\beta_1 z_{qq'},\beta_1'z_{qq'}').$$
\end{description}
For statement S1, by Remark~\ref{rem1}, it suffices to show that $\beta z_{qq'}\not=\beta'z'_{qq'}$.
First note that 
%by construction, 
$D$ is ZERO exactly
when there is a transition $((p,p'),(v,x,x')$, $(q,q'))$ such that
state $(p,p')$ has a delay-path value $((y,y'),(w,z,z'))$
and $yx,y'x'$ are of the form $f\sigma g$ and $f\sigma'g'$, respectively,
with $\sigma,\sigma'$ being distinct letters, and
$\alpha=wv$, $\beta=zx$, $\beta'=z'x'$. Now using the above lemma we get, for some $h$ 
\pssi
$\beta z_{qq'}=zxz_{qq'}=hyxz_{qq'}=hf\sigma gz_{qq'}$ and
\pnsi
$\beta' z'_{qq'}=z'x'z'_{qq'}=hy'x'z'_{qq'}=hf\sigma' g'z'_{qq'}$,
\pssn
which implies $\beta z_{qq'}\not=\beta'z'_{qq'}$, as required.
\pssn
For statement S2, by Remark~\ref{rem1}, it suffices to show that $\beta \not=\beta'$. By symmetry, we only consider the case of $D=(\ew,u)$. First note that 
%by definition, 
$D$ is $(\ew,u)$ exactly
when there is a transition $((p,p'),(v,x,x'),(q,q'))$ such that
state $(p,p')$ has a delay-path value $((y,y'),(w,z,z'))$
and $y'x'=yxu$, and
$\alpha=wv$, $\beta=zx$, $\beta'=z'x'$.
Using the above lemma we get, for some $h$ 
\pssi
$\beta =zx=hyx$ and $\beta' =z'x'=hy'x'=hyxu$,
\pssn
which implies $\beta \not=\beta'$, as required.
\pssn
For statement S3, we assume that $\beta z_{qq'}=\beta'z'_{qq'}$ and we show that $\beta_1 z_{qq'}\not=\beta_1 z'_{qq'}$.
Assume for the sake of contradiction that also $\beta_1 z_{qq'}=\beta_1 z'_{qq'}$. Using the above lemma we get, for some $h$ 
\pssi
$\beta=hs,\,\beta'=hs',\,\beta_1=h_1s_1,\, \beta_1'=h_1s_1'$.
\pssn
Also by the assumptions we get $hsz_{qq'}=hs'z'_{qq'}$ and
$h_1s_1z_{qq'}=h_1s_1'z'_{qq'}$, implying that $sz_{qq'}=s'z'_{qq'}$ and
$s_1z_{qq'}=s_1'z'_{qq'}$. If $z_{qq'}=z'_{qq'}$ then $s=s'=\ew$
and $s_1=s_1'=\ew$, which is impossible as $(s,s')\not=(s_1,s_1')$.
If $z_{qq'}$ is of the form $z_1z'_{qq'}$ (or vice versa), then
we get that $(s,s')=(s_1,s_1')$, which is again impossible.
\pssn
Regarding the size of the witness returned, consider again statements S1--S3
above.
Then, the size of the witness is 
$\sz{(x,y,z)}+\sz{(w_{qq'},z_{qq'},z'_{qq'})}-2$, 
where $(w_{qq'},z_{qq'},z'_{qq'})$ could be $(\ew,\ew,\ew)$ 
and $(x,y,z)$ is a path value of state $(q,q')$: $(\alpha,\beta,\beta')$ 
or $(\alpha_1,\beta_1,\beta_1')$. As $(w_{qq'},z_{qq'},z'_{qq'})$
is based on a shortest path from $(q,q')$ to a final state of $\tru$,
we have $\sz{(w_{qq'},z_{qq'},z'_{qq'})}<\sz{\tru}$.
As the algorithm visits each transition of $\tru$ at most once, 
and $(x,y,z)$ is built by concatenating transition labels starting from
label $(\ew,\ew,\ew)$, we have that
the size of $(x,y,z)$ is bounded by the sum of the sizes of the transitions of
$\tru$. Hence, the size of the witness is $O(\sz{\tru})$. The claim about
the size of the witness follows by the fact that $\sz{\tru}=\Theta(\sz{\trt}^2)$.
\end{proof}

%%%%%%%%%%%%%%%%%%%%%%%%%%%%%%%%%%%%%%%%%%%%%%
\section{Object Classes Representing Code Properties}\label{sec:codes}
In this section we discuss our implementation of objects representing code properties. The set of
all code properties is uncountable, but of course one can only implement countably many
properties. So we are interested in systematic methods that allow one to
 {formally} describe
code properties.
Three such formal methods are the implicational conditions of \cite{Jurg:1999}, where a property is described
by a first order formula of a certain type, the regular trajectories of \cite{Dom:2004}, where a
property is described by a regular expression over $\{0,1\}$, and the transducers of \cite{DudKon:2012},
where a property is described by a transducer. These formal methods appear to be able to describe most
properties of practical interest. The formal methods of regular trajectories and transducers are implemented here,
as the transducer formal method follows naturally our implementation of transducers, and every regular expression
of the  regular
trajectory formal method can be converted efficiently to a transducer object of the transducer formal method. The implementation
of implicational conditions is an interesting topic for future research.
\pmsn
Next we review quickly the formal methods of regular trajectories and transducers, and then
discuss our implementation of these formal methods.
\pmsn
\emshort{Regular trajectory properties \cite{Dom:2004}.}
In this formal method a regular expression $\ree$ over $\{0,1\}$ describes the  code property $\pty_{\ree}$
as follows.
The 0s in $\ree$ indicate positions of alphabet symbols that make up a word $v$, say,
and  the 1s in $\ree$ indicate positions of arbitrary symbols that, together with the 0s, make up
a word $u$, say. A language $L$ satisfies $\pty_{\ree}$ if there are no two different words in $u,v\in L$
such that $u$ has the structure indicated by $\ree$ and $v$ has a structure obtained by
deleting the 1s from $\ree$. For example, the \emdef{infix code} property is defined by
the regular expression $1^*0^*1^*$, which says that by deleting consecutive symbols at the beginning and/or
at the end of an $L$-word $u$, one cannot get a different $L$-word. Equivalently, $L$ is an infix code
if no $L$-word is an infix of another $L$-word. Note that $1^*0^*1^*$ describes all infix codes
over all possible alphabets.
\pmsn
\emshort{Input-altering transducer properties \cite{DudKon:2012}.}
A transducer $\trt$ is \emdef{input-altering} if, for all words $w$, $w\notin\trt(w)$.
In this formal method such a transducer $\trt$ describes  the code property
$\iatp_{\trt}$ consisting of all languages $L$ over the input alphabet of $\trt$ such that
\begin{equation}\label{eqIAT}
\trt(L)\cap L = \emptyset.
\end{equation}
With this formal method we can define the \emdef{suffix code} property: $L$ is a suffix code if no
$L$-word is a proper suffix of an $L$-word.
The transducer defined in Example~\ref{exSuffixTran} is input-altering and describes the
suffix code property over the alphabet \texttt{\{a, b\}}. Similarly, we can define the infix code
property by making another transducer that,
on input $w$, returns any proper infix of $w$.
We note that, for every regular expression $\ree$ over $\{0,1\}$ and alphabet $\al$, one can construct in linear time
an input-altering transducer $\trt$ with input alphabet $\al$  such that $\pty_{\ree}=\iatp_{\trt}$
\cite{DudKon:2012}. Thus, every regular trajectory property is an input-altering transducer property.
\pmsn
\emshort{Error-detecting properties via input-preserving transducers  \cite{Kon:2002,DudKon:2012}.}
A transducer $\trt$ is \emdef{input-preserving} if, for all words $w$ in the domain of $\rel(\trt)$, $w\in\trt(w)$.
Such a transducer $\trt$ is also called a
\emdef{channel transducer}, in the sense that an input message $w$ can be transmitted via $\trt$
and the output can always be $w$ (no transmission error), or a word other than $w$ (error).
In this formal method the transducer $\trt$ describes the \emdef{error-detecting for $\trt$} property $\edp_{\trt}$
consisting of all languages $L$ over the input alphabet of $\trt$ such that
\begin{equation}\label{eqIPT}
\trt(w)\cap (L-w)=\emptyset,\>\>\hbox{ for all words $w\in L$.}
\end{equation}
The term error-detecting \emshort{for} $\trt$ is used in the sense that $L$ is meant to consist of all valid
messages one can transmit via $\trt$, and $\trt$
cannot turn a valid message into a different valid message.
We note that, for every input-altering transducer $\trt$, one can make in linear time a channel
transducer $\trt'$ such that $\iatp_{\trt}=\edp_{\trt'}$ \cite{DudKon:2012}.
Thus, every input-altering transducer property is an error-detecting property.
\begin{EX}\label{exSub1}
Consider the property \emph{1-substitution error-detecting code} over \texttt{\{a, b\}},
where error means the
substitution of one symbol by another symbol. A classic characterization is that,
$L$ is such a code if and only if
the  Hamming distance between any two different words in $L$ is at least 2~\cite{Ham:1950}.
The following channel transducer defines this property---see also Fig~\ref{fig:ed}.
The transducer will substitute at most
one symbol of the input word with another symbol.
\begin{verbatim}
s1   = '@Transducer 0 1 * 0\n'\
        '0 a a 0\n'\
        '0 b b 0\n'\
        '0 b a 1\n'\
        '0 a b 1\n'\
        '1 a a 1\n'\
        '1 b b 1\n'
t1 = fio.readOneFromString(s1)
\end{verbatim}
\end{EX}
\pssn
We note that the transducer approach to defining error-detecting code properties is very powerful,
as it allows one to model insertion and deletion errors, in addition to substitution errors---see  Fig~\ref{fig:ed}. Codes for such errors are  actively under investigation---see~\cite{PAF:2013},
for instance.
\begin{figure}[ht!]
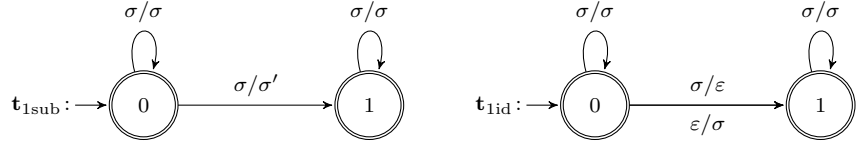

\begin{transducer}
	\node [state,initial,accepting] (q0) {$0$};
	\node [node distance=0.75cm,left=of q0,anchor=east] {$\trt_{\rm 1sub}\colon$};
	\node [state,accepting,right of=q0] (q1) {$1$};
	\node [state,initial,accepting,right of=q1] (q2) {$0$};
	\node [node distance=0.75cm,left=of q2,anchor=east] {$\trt_{\rm 1id}\colon$};
	\node [state,accepting,right of=q2] (q3) {$1$};
%    \node [node distance=1cm,left of=q0] {$\trt_m\colon{}$};
	\path (q0) edge [loop above] node [above] {$\sigma/\sigma$} ()
		(q0) edge node [above] {$\sigma/\sigma'$} (q1)
		(q1) edge [loop above] node [above] {$\sigma/\sigma$} ()
		(q2) edge node [above] {$\sigma/\ew$} (q3)
		(q2) edge node [below] {$\ew/\sigma$} (q3)
		(q2) edge [loop above] node [above] {$\sigma/\sigma$} ()
		(q3) edge [loop above] node [above] {$\sigma/\sigma$} ();
\end{transducer}
\begin{center}
\parbox{0.85\textwidth}{\caption{On input $x$ the transducer  $\trt_{\rm 1sub}$ %shown in the figure
outputs either $x$, or any word that results by substituting exactly one symbol in $x$.
On input $x$ the transducer  $\trt_{\rm 1id}$ %shown in the figure
outputs either $x$, or any word that results by deleting, or inserting, exactly one symbol in $x$.
{Note}: The use of labels on arrows is explained in Fig.~\ref{fig:suffix}. 
}
\label{fig:ed}}
\end{center}
\end{figure}
\pmsn
\emshort{Error-correcting properties via input-preserving transducers.}
An input-preserving (or channel) transducer is used to describe the
\emdef{error-correcting for $\trt$} property $\ecp_{\trt}$ consisting of all languages $L$
over the input alphabet of $\trt$ such that
\begin{equation}\label{eqEC}
\trt(v)\cap\trt(w)=\emptyset,\>\>\hbox{ for all distinct words $v,w\in L$.}
\end{equation}
The term error-correcting \emshort{for} $\trt$ is used in the sense that any
message $w'$ received from an $L$-word via $\trt$ can result from only one
such $L$-word, so if $w'\notin L$ then $w'$ can in principle be corrected to exactly one
$L$-word.
\begin{remark}\label{remECvsED}
It can be shown that a language is error-correcting for $\trt$
if and only if it is error-detecting for any transducer realizing the
relation $\rel(\trt^{-1})\compose\rel(\trt)$.
\end{remark}
\begin{remark}\label{rem3indep}
	All input-altering, error-detecting and error-correcting properties are 3-independences.
\end{remark}
%
%
%\pmsn\emshort
\subsection{Implementation in FAdo.}
We present now our implementation of the previously mentioned
code properties.
We have defined the Python classes \pmsi
\texttt{TrajProp, IATProp, ErrDetectProp, ErrCorrectProp}
\pmsn
corresponding to the four types of properties discussed above.
These four property types are described, respectively,
by regular trajectory expressions, input-altering transducers,
input-preserving transducers, and input-preserving transducers.
In all four cases,
% a transducer object is required to create an object of the class.
given a  transducer object,  an object of the class is created.
An object \ppi of the class \texttt{IATProp}, say, is defined via some transducer \tti and represents
a particular code property, that is, the class of languages satisfying Eq.~(\ref{eqIAT}).
\pssn
The class \texttt{ErrDetectProp}
is a 
{superclass} of the others. These classes and all related
methods and functions are in the module \texttt{codes.py} and can be imported as follows.
\begin{verbatim}
import FAdo.codes as codes
\end{verbatim}
Although each of the above four classes requires a transducer to create an object of the class,
we have defined a set of what we call \emdef{build functions} as a user interface for
creating code property objects.
% in a manner that corresponds to the mathematical definitions of these properties. 
These  build functions are shown next in use with specific arguments from previous examples.
\begin{EX}\label{exBuildFunctions}
Consider again Examples
\ref{exSuffixTran} and \ref{exSub1} in which the strings \texttt{s} and \texttt{s1} are defined containing, respectively, the
proper suffixes transducer and the transducer permitting up to 1 substitution error.
The following object definitions are possible with the FAdo package
\begin{verbatim}
    icp = codes.buildTrajPropS('1*0*1*', {'a', 'b'})
    scp = codes.buildIATPropS(s)
    s1dp = codes.buildErrorDetectPropS(s1)
    s1cp = codes.buildErrorCorrectPropS(s1)
    pcp = codes.buildPrefixProperty({'a', 'b'})
    icp2 = codes.buildInfixProperty({'a', 'b'})
\end{verbatim}
In the first statement, \texttt{icp} represents the infix code property over
the alphabet \texttt{\{a, b\}} and is defined via the trajectory expression \texttt{1*0*1*}.
In the next three statements, \texttt{scp, s1dp, s1cp} represent, respectively, the suffix code
property, the 1-substitution error-detecting property and the
1-substitution error-correcting property. The last two statements are explained 
below---\texttt{pcp} and \texttt{icp2} represent the prefix code  and infix code properties, 
respectively.
\end{EX}
\pmsn
\emshort{Fixed properties.}
Some properties are well known in the theory of codes, so we have created specific
classes for these properties and, therefore, FAdo users need not write
transducers, or trajectory regular expressions, for creating  these properties.
%All these classes are in fact subclasses of the \texttt{TrajProp} class, as they can all be defined via trajectory regular expressions \cite{Dom:2004}.
As before, users need only to know about the \texttt{build}-interfaces for
creating objects of these classes.
\pmsn
\texttt{buildPrefixProperty(Sigma)}: returns an object of the class \texttt{PrefixProp} that represents all prefix codes over the alphabet \texttt{Sigma}.
\pmsn
\texttt{buildSuffixProperty(Sigma)}: returns an object of the class \texttt{SuffixProp} that represents all suffix codes over the alphabet \texttt{Sigma}.
\pmsn
\texttt{buildInfixProperty(Sigma)}: returns an object of the class \texttt{InfixProp} that represents all infix codes over the alphabet
\texttt{Sigma}. Note that an infix code is both prefix and suffix and this fact is reflected in the class definition, by considering \texttt{InfixProp} a Python subclass of both \texttt{PrefixProp} and \texttt{SuffixProp}.
\pmsn
\texttt{buildOutfixProperty(Sigma)}: returns an object of the class \texttt{OutfixProp} that represents all outfix codes over the alphabet
\texttt{Sigma}. A language $L$ is an \emdef{outfix code} if deleting an infix of an $L$-word cannot result
into another $L$-word. Note that an outfix code is both prefix and suffix, and, as above, \texttt{OutfixProp} is a subclass of the correspondent Python classes.
\pmsn
\texttt{buildHypercodeProperty(Sigma)}: returns an object of the class \texttt{HypercodeProp} that represents all hypercodes over the alphabet
\texttt{Sigma}. A language $L$ is a \emdef{hypercode} if deleting any symbols of an $L$-word cannot result
into another $L$-word. Note that a hypercode is both infix and outfix and as above, \texttt{HypercodeProp} is a subclass of the correspondent Python classes.
\pmsn
Each of the above methods creates internally a transducer whose input and output alphabets are equal to the
given alphabet
\texttt{Sigma}, and then passes the transducer to the constructor of the  respective class.%class \texttt{IATProp}.
\pmsn
\emshort{Combining code properties.}
In practice it is desirable to be able to talk about languages that satisfy more than one
code property. For example, most of the 1-substitution error-detecting codes used in practice
are infix codes
(in fact \emdef{block codes}, that is, those whose words are of the same length).
We have defined the operation \verb1&1 between any two  error-detecting properties
independently of how they were created. This operation returns an object
representing the class of all languages satisfying both properties.
This object is constructed via the transducer that results by taking the union of the two transducers
describing the two properties---see Rational Operations in Section~\ref{sec:transducers}.
\begin{EX}
Using the properties \texttt{icp, s1dp} created above in Example~\ref{exBuildFunctions}, 
we can create the conjunction \verb2p12 of these properties, and using the properties 
\texttt{pcp, scp} we can create their conjunction \texttt{bcp} which is known as the 
\emdef{bifix code property}.
\begin{verbatim}
    p1 = icp & s1dp
    bcp = pcp & scp
\end{verbatim}
\pnsn
The object \verb2p12 represents the
property $\cP_{\ree}\cap\edp_{\texttt{s1}}$, where $\ree=\texttt{1*0*1*}$.
It is of type \texttt{ErrDetectProp}.
If, however, the two properties involved are input-altering then our implementation makes sure that
the object returned is also of type input-altering---this is the case for \texttt{bcp}.
This is important as the satisfaction problem for
input-altering transducer properties can be solved more efficiently than the
satisfaction problem for error-detecting properties---see Section~\ref{sec:code-methods}.
\end{EX}
%
%
%
%%%%%%%%%%%%%%%%%%%%%%%%%%%%%%%%%%%%%%%%%%
\subsection{Aspects of Code Hierarchy Implementation}\label{sec:hierarchy}
As stated above, our top Python superclass is \texttt{ErrDetectProp}.
When viewed as a set of (potential) objects, this class implements the set of properties
\begin{equation}\label{eqEDP}
\edp \>=\> 
\{\edp_{\trt}\mid \trt\hbox{ is an input-preserving transducer}\}.
\end{equation}
For any \texttt{ErrDetectProp} object \ttp, let us denote by \rep{\ttp} the property in $\edp$ represented by \ttp. If \ttps and \ttqs are any \texttt{ErrDetectProp} objects such that $\rep{\ttp}\sse\rep{\ttq}$ and we know that a language satisfies $\rep{\ttp}$ then it follows logically that the language also satisfies $\rep{\ttq}$ and, therefore, one does not need to execute the method \texttt{q.notSatsfiesW} on the automaton accepting that language. Similarly,  as $\rep{\ttp\&\ttq}=\rep{\ttp}$ the method invocation 
$\ttp\&\ttq$ should simply return \ttp. 
It is therefore desirable to have a method `$\po$' such that if $\ttp\po\ttq$ returns true then $\rep{\ttp}\sse\rep{\ttq}$.

In fact, for \texttt{ErrDetectProp} objects, we have implemented the methods
`$\&$' and `$\po$' in a way that the triple
$(\texttt{ErrDetectProp},\&,\po)$ constitutes a syntactic hierarchy (see further below) which can
be used to simulate the properties in~(\ref{eqEDP}).
In practice this means that `$\&$' simulates intersection between properties and
`$\po$' simulates subset relationship between two properties such that the
following desirable statements hold true, for any  \texttt{ErrDetectProp} objects \texttt{p, q}
\pmsi  \verb1p & p1 returns \verb1p1
\pssi   $\verb1p1 \po \verb1q 1$ if and only if \verb1 p & q 1 returns \verb1p1
\pmsn
We note that the syntactic simulation of the properties in Eq.~\eqref{eqEDP} is not complete
(in fact it 
{cannot} be complete): for any \texttt{ErrDetectProp} objects \texttt{p, q}
defined via transducers $\trt$ and $\trs$ with $\edp_{\trt}\sse\edp_{\trs}$ it does not
always hold that $\verb1p1 \po \verb1q1$. On the other hand, our implementation of
the set of the five fixed properties constitutes a complete simulation of these properties,
when the same alphabet is used. This implies, for instance, that
\pmsi  \verb1pcp & icp2 1  returns  \verb1 icp21
\pmsn
where we have used the notation of Example~\ref{exBuildFunctions}.
Our implementation associates to each  
object \ttps of type \texttt{ErrDetectProp} a nonempty set 
\ttp.ID of names. If \ttps is a fixed property object, \ttp.ID has one hardcoded name.
If \ttps is built from a transducer $\trt$, \ttp.ID has one name, the name of $\trt$---this name is based on a string description of $\trt$. If \ttps = \verb1q&r1, then \ttp.ID is the union of
\ttq.ID and \texttt{r}.ID minus any fixed property name $N$ for which another fixed property name $M$ exists in the union 
such that the $M$-property is contained in the $N$-property.
%\pmsi
%The mathematical definitions  and results about `syntactic hierarchy',
%`syntactic simulation' and `complete syntactic simulation' are
%explained in the next section.
%
%
\pssi
Next we give a mathematical definition of what it means to simulate a set of
code properties $\psetq=\{\cQ_j\mid j\in J\}$ via a syntactic hierarchy
$(G,\sop,\po)$---see definition below---which can  ultimately be implemented
(as is the case here) in a standard programming language.
The idea is that each $g\in G$ represents a property $\rep{g}=\cQ_j$, for some index $j$,
and $G$ is the set of generators of the semigroup $(\gene{G},\sop)$ whose operation `$\sop$'
simulates the process of combining properties  in $\psetq$, that is $\rep{x\sop y}=\rep x\cap\rep y$,
and the partial order  `$\po$' simulates subset
relation between properties, that is $x\po y$ implies $\rep x\sse\rep y$, for all $x,y\in\gene G$.
%The function `$\szg{}$' assigns a size to each semigroup element $x$.
%Moreover, the simulation involves a computable function
\pnsi
The first  result is that there is an efficient simulation of the set
$\edp$ in Eq.~\eqref{eqEDP}---see Theorem~\ref{th:sim}.
The second result is that there can be no \emph{complete} simulation of that set
of properties, that is, a simulation such that $\rep x\sse\rep y$ implies $x\po y$,
for all $x,y\in\gene G$---see Theorem~\ref{th:nonsim}.
\pssn
\begin{definition}\label{def:sh}
A syntactic hierarchy is a triple $(G,\sop,\po)$ such that $G$ is a  nonempty set  and
\begin{enumerate}
\item
$(\gene{G},\sop)$ is the commutative semigroup generated by $G$ with computable
operation `$\sop$'.
\item
$(\gene{G},\po)$ is a decidable partial order (reflexive, transitive, antisymmetric).
\item
For all $x,y\in\gene G$, $x\po y$ implies $x\sop y=x$.
\item
For all $x,y\in\gene G$, $x\sop y\po x$.
\end{enumerate}
\end{definition}
\pssn
\begin{comment}
In the above definition, computable `$\sop$' means that there is an algorithm that returns $x\sop y$, given the
descriptions of any two elements $x,y\in\gene{G}$. Similarly, decidable
$\po$ means that there is an algorithm that returns whether $x\po y$,
given the descriptions  of any two elements $x,y\in\gene{G}$. 
By description of the elements in $\gene{G}$ we mean any appropriate encoding of these elements, in the same way that we talk about algorithms on numbers and
graphs without having to specify exactly the encodings of these objects.
\end{comment}

Next we list a few properties of the operation `$\sop$' and the 
order `$\po$'. \pssn
\begin{lemma}\label{lemSH}
The following statements hold true, for all $x,y,z\in\gene G$,
\begin{enumerate}
\item
$x\po x$ and $x\sop x = x$
\item
$x\po y$ if and only if $x=y\sop z$ for some $z\in\gene G$.
\item
$x=x\sop y$ if and only if $x\po y$.
\item
If $x\le y$ and $x\le z$ then $x\le y\sop z$.
\item
If $x=g_1\sop\cdots\sop g_n$, for some $g_1,\ldots,g_n\in G$, with all $g_i$'s
distinct and $n\ge2$, then $x<g_1$ or $x<g_2$, and hence $x$ is not maximal.
\item
$x$ is maximal if and only if $x$ is prime (meaning, $x=u\sop v$ implies $x=u=v$).
\end{enumerate}
\end{lemma}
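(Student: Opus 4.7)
The plan is to prove the six items in sequence, allowing later parts to reuse earlier ones, and to draw only on the axioms of Definition~\ref{def:sh}: the commutative/associative semigroup structure of $\sop$, the partial order properties of $\po$, axiom~3 ($x\po y$ implies $x\sop y=x$), and axiom~4 ($x\sop y\po x$, which by commutativity also gives $x\sop y\po y$).

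The first three items are essentially immediate. For (1), reflexivity of $\po$ yields $x\po x$, and axiom~3 then gives $x\sop x = x$. For (2), the forward direction uses axiom~3 to write $x = x\sop y = y\sop x$, so taking $z:=x$ works; the reverse direction is axiom~4 combined with commutativity. Item (3) simply repackages these: axiom~3 supplies one direction, and axiom~4 applied to the identity $x=x\sop y$ supplies the other. For (4), assuming $x\po y$ and $x\po z$, part~(3) gives $x=x\sop y$ and $x=x\sop z$, and associativity produces
\[
x\sop(y\sop z)=(x\sop y)\sop z=x\sop z=x,
\]
so (3) again yields $x\po y\sop z$.

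Parts (5) and (6) contain the more substantive content. For (5), I would group $x=g_1\sop(g_2\sop\cdots\sop g_n)$ (possible since $n\ge 2$) and apply axiom~4 to obtain $x\po g_1$; by the same argument with the factors permuted, $x\po g_2$. If both were equalities then $g_1=g_2$, contradicting the distinctness assumption, so at least one is strict, and $x$ is therefore not maximal. For (6), if $x$ is maximal and $x=u\sop v$, axiom~4 forces $x\po u$ and $x\po v$, each of which must actually be equality by maximality, so $x=u=v$ and $x$ is prime; conversely, if $x$ is prime but not maximal, pick $y$ with $x<y$, and by (2) there exists $z$ with $x=y\sop z$, so primality forces $x=y$, contradicting $x<y$. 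The only care required anywhere is to remember to combine commutativity with axioms~3 and~4 to obtain the symmetric forms; beyond that there is no real obstacle, and the whole lemma is a mechanical unfolding of Definition~\ref{def:sh}.
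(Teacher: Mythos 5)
Your proof is correct and follows the same route as the paper: items (1)--(4) are derived exactly as in the paper's sketch (the paper writes out only (2) and (4), using axiom 3 for the ``only if''/``if'' split in (2) and the identity $x=x\sop(y\sop z)$ via associativity for (4)), and your arguments for (5) and (6), which the paper leaves to the reader as ``standard logical arguments,'' are a sound mechanical unfolding of Definition~\ref{def:sh}. No gaps.
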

\begin{proof}
The proof of correctness is based on the previous definition and uses standard logical arguments. We present only proofs for the second and fourth statements.
\pnsi
The `if' part of the second statement follows from the fourth statement of the above definition, and the `only if' part follows from the third statement of the above definition.
\pnsi
For the fourth statement, using the fact that $x\sop(y\sop z)\po y\sop z$,
it is sufficient to show that $x=x\sop(y\sop z)$. This follows when we
note that $x\po y$ implies $x=x\sop y$ and $x\po z$ implies $x=x\sop z$.
\end{proof}
\pssn
\begin{definition}\label{def:sim}
Let $\psetq=\{\cQ_j\mid j\in J\}$ be a set of properties, for some index set
$J$. A (syntactic) simulation of $\psetq$ is a quintuple $(G,\sop,\po,\rep{\>},\phi)$
such that $(G,\sop,\po)$ is a syntactic hierarchy and
\begin{enumerate}
\item
$\rep{\>}$ is a surjective mapping of $\gene G$ onto $\psetq$;
\item
for all $x,y\in\gene G$, $x\po y$ implies $\rep x\sse \rep y$;
\item
for all $x,y\in\gene G$, $\rep{x\sop y}=\rep x\cap\rep y$;
\item
$\phi$ is a computable function of $J$ into $\gene G$ such that $\rep{\phi(j)}=\cQ_j$.
\end{enumerate}
The simulation is called \emph{complete} if, for all $x,y$
\[
\rep x\sse \rep y\quad\hbox{implies}\quad x\po y.
\]
The simulation is called linear if $J$ has a size function $\sz{\cdot}$ and $\gene G$ has a size
function $\szg{\cdot}$
such that  $\szg{\phi(j)}=O(\sz{j})$, for all $j\in J$, and
for all $x,y$
\[
\szg{x\sop y}\>=\>O(\szg{x}+\szg{y}).
\]
\end{definition}
By a size function on a set $X$, we mean any function $f$ of $X$ into
$\N_0$.
\begin{theorem}\label{th:sim}
There is a linear simulation of the set
$$\{\edp_{\trt}\mid \trt\hbox{ is an input-preserving transducer}\}.$$
\end{theorem}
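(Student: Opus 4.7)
The plan is to construct the simulation explicitly. I would take $G$ to be the set of (canonical descriptions of) input-preserving transducers and let $\gene G$ consist of finite nonempty subsets of $G$, with $\sop$ being set union and $\po$ being reverse inclusion: $X\po Y$ iff $Y\sse X$. The representation map sends $X$ to $\rep X=\bigcap_{\trt\in X}\edp_\trt$, the embedding $\phi$ sends each input-preserving transducer $\trt$ (viewed as an element of the index set $J$) to its singleton $\{\trt\}$, and the size $\szg X$ is the sum of the sizes of the transducers in $X$.

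The purely algebraic parts of Definition~\ref{def:sh} fall out directly from elementary set theory. Set union is commutative, associative, and idempotent, so $(\gene G,\sop)$ is a commutative semigroup generated by $G$, and the operation is clearly computable. Set inclusion is decidable on finite collections of canonical descriptions, so $\po$ is a decidable partial order; the identities $X\po Y\Rightarrow X\sop Y=X$ and $X\sop Y\po X$ reduce to $Y\sse X\Leftrightarrow X\cup Y=X$ and $X\sse X\cup Y$, respectively.

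The substantive step is the semantic identity $\rep{X\sop Y}=\rep X\cap\rep Y$, together with $X\po Y\Rightarrow\rep X\sse\rep Y$. Both reduce to the observation that for input-preserving transducers $\trt$ and $\trs$, $\edp_{\trt\lor\trs}=\edp_\trt\cap\edp_\trs$, which I would verify in one line from Eq.~\eqref{eqIPT} using $(\trt\lor\trs)(w)=\trt(w)\cup\trs(w)$ and noting that $\trt\lor\trs$ remains input-preserving. Iterating gives the required equalities; surjectivity of $\rep$ onto $\edp$ is immediate from $\rep{\{\trt\}}=\edp_\trt$, and $\phi$ is trivially computable with $\rep{\phi(\trt)}=\edp_\trt$. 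Linearity is also immediate: $\szg{\phi(\trt)}=|\trt|$ and $\szg{X\sop Y}=\szg{X\cup Y}\le\szg X+\szg Y$.

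In my view there is no real obstacle here; the theorem amounts to making the right definitional choices. The one choice worth flagging is to let $\gene G$ consist of \emph{sets} of transducers rather than transducers themselves combined by $\lor$: if $\sop$ were literally the transducer-union operation, then $\trt\sop\trt$ would be strictly larger than $\trt$, violating the idempotency $x\sop x=x$ forced by Lemma~\ref{lemSH}.1. Passing to sets sidesteps this, makes $\po$ decidable by reducing it to syntactic set inclusion, and delivers the linear size bound for free.
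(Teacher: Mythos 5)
Your construction is essentially identical to the paper's: the paper also takes $\gene G$ to be the finite nonempty sets of input-preserving transducers with $\sop$ as union and $\po$ as reverse inclusion, maps $T$ to $\edp_{\lor T}=\bigcap_{\trt\in T}\edp_{\trt}$, and sets $\phi(\trt)=\{\trt\}$; the only cosmetic difference is that the paper defines $\szg T$ as $\sz{\lor T}$ rather than the raw sum of sizes, which yields the same linear bound. Your proposal is correct.
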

\begin{proof}
Let $\mathbf T$ be the set consisting of all finite sets of transducers.
Let $T_1,T_2\in\mathbf T$.
We define
\pmsi
$G = \{\{\trt\}\mid \trt\hbox{ is an input-preserving transducer}\}$.
\pmsi
$T_1\sop T_2=T_1\cup T_2$.
\pmsi
$T_1\po T_2$, if $T_2\sse T_1$.
\pssn
The above definitions imply that $\gene G$ consists of all $T$, where $T$
is a finite nonempty set of input-preserving transducers, and that indeed $(\gene G,\sop)$ is
a commutative semigroup and $(\gene G,\po)$ is a partial order. Moreover one verifies
that the last two requirements of Definition~\ref{def:sh} are satisfied. Thus
$(G,\sop,\po)$ is a syntactic hierarchy.
\pmsn
Next we define the size function. For a finite nonempty set $T=\{\trt_1,\ldots,\trt_n\}$
of transducers, we
denote with $\lor T$ the transducer $\trt_1\lor\cdots\lor\trt_n$
of size $O(\sum_1^n\sz{\trt_i})$ realizing $\rel(\trt_1)\cup\cdots\cup\rel(\trt_n)$. Then, %for any $T\sse\mathbf T$,
define $\szg{T} = \sz{\lor T}$.
One verifies that  $\szg{T_1\sop T_2}=O(\szg{T_1}+\szg{T_2})$ as required.
\pssn
Next we use the syntactic hierarchy  $(G,\sop,\po)$ to define the required simulation. %$(\rep{\>},\phi)$.
First, define $\phi(\trt)=\{\trt\}$, for any input-preserving transducer $\trt$.
Then, define
\[
\rep{T} = \edp_{\lor T},\>\hbox{ which equals } \>\bigcap_{\trt\in T}\edp_{\trt}.
\]
One verifies that the requirements of Definition~\ref{def:sim} are satisfied.
\end{proof}

The next result shows that there can be no complete simulation of
the set of error-detecting properties. This follows from the undecidability
of the Post Correspondence problem using 
% tools that have been used to
methods from establishing the undecidability of basic transducer related problems \cite{Be:1979}.

\begin{theorem}\label{th:nonsim}
There is no complete simulation of the set of properties
$$\{\edp_{\trt}\mid \trt\hbox{ is an input-preserving transducer}\}.$$
\end{theorem}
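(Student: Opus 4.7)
The strategy will be to derive a contradiction by using a hypothetical complete simulation to decide a problem known to be undecidable. Suppose $(G,\sop,\po,\rep{\,},\phi)$ were a complete simulation of $\edp$. Given input-preserving transducers $\trt_1,\trt_2$, I would compute $x=\phi(\trt_1)$ and $y=\phi(\trt_2)$ in $\gene G$; by the definition of simulation, $x\po y$ yields $\rep x\sse\rep y$, and by completeness the converse holds as well, so $\edp_{\trt_1}\sse\edp_{\trt_2}$ iff $x\po y$. Since $\po$ is decidable on $\gene G$, this would give an algorithm for the containment problem between error-detecting properties; hence it suffices to show that this containment problem is undecidable.

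The next step will be to reformulate $\edp$-containment in terms of relations. Since each $\edp_\trt$ is a $3$-independence (Remark~\ref{rem3indep}) and singletons trivially belong to $\edp_\trt$, a language $L$ violates $\edp_\trt$ iff some pair $\{v,w\}\sse L$ with $v\not=w$ does. Writing $\sigma_\trt=\rel(\trt)\cup\rel(\trt)^{-1}$, the pair $\{v,w\}$ violates $\edp_\trt$ exactly when $(v,w)\in\sigma_\trt$. Quantifying over all potential two-element counterexample languages yields
\[
\edp_{\trt_1}\sse\edp_{\trt_2}\iff \sigma_{\trt_2}\sm\diag\sse\sigma_{\trt_1}\sm\diag,
\]
where $\diag$ denotes the identity relation over the common alphabet.

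I will then reduce the classical containment problem for rational relations---undecidable by a PCP argument (see \cite{Be:1979})---to the relation inclusion displayed above. Given rational relations $\rho_1,\rho_2$ over some alphabet $\al$, I pick two fresh symbols $a,b\not\in\al$ and form the rational relations $\rho'_i=\{(aw,bw')\mid(w,w')\in\rho_i\}$. Setting $\alD=\al\cup\{a,b\}$, the identity relation $\diag$ over $\alD^*$ is itself rational, so I can build an input-preserving transducer $\trt_i$ realizing $\rho'_i\cup\diag$. By design $\rho'_i\cap\diag=\es$ and $\rho'_i\cap(\rho'_i)^{-1}=\es$, because inputs of pairs in $\rho'_i$ begin with $a$ while inputs of pairs in $(\rho'_i)^{-1}$ begin with $b$. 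Splitting by the first input symbol gives
\[
\sigma_{\trt_2}\sm\diag\sse\sigma_{\trt_1}\sm\diag\iff \rho'_2\sse\rho'_1\iff\rho_2\sse\rho_1,
\]
so the composed reduction would decide rational-relation containment, a contradiction.

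The main obstacle in this plan is the intrinsic symmetry of the $\edp$ condition, which entangles $\rel(\trt)$ with its inverse and mixes them with the diagonal; the marker trick using fresh symbols $a,b$ is precisely what makes $\rho'_i$, $(\rho'_i)^{-1}$, and $\diag$ pairwise separable, and what therefore lets an arbitrary rational-relation containment be read off the corresponding $\edp$-containment.
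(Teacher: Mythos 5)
Your proposal is correct, and its top-level architecture is the same as the paper's: assume a complete simulation $(G,\sop,\po,\rep{\>},\phi)$ exists, use $\phi$ and the decidability of $\po$ to decide containment between error-detecting properties, recast that containment as an inclusion of rational relations, and show the inclusion is undecidable. Your reformulation $\edp_{\trt_1}\sse\edp_{\trt_2}\iff\sigma_{\trt_2}\sm\diag\sse\sigma_{\trt_1}\sm\diag$ is essentially the paper's Lemma~\ref{lem:pties} (stated there as $\rel(\trs\lor\trs^{-1})\sse\rel(\trt\lor\trt^{-1})$); your version, which explicitly strips the diagonal, is arguably the more careful of the two, though both tacitly assume the transducers share an input alphabet, which your reduction arranges anyway. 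Where you genuinely diverge is in the undecidability step. The paper (Lemma~\ref{lem:undecide}) reduces directly from PCP: it takes Berstel's relations $U^+$ and $(C^+\times B^+)-V^+$, pads each with a diagonal part $D=\diag(C^+)\cup\diag(aaB^+)$ to force input-preservation, and uses the prefix $(\ew,aa)$ to keep the pieces and their inverses disjoint, so that $U^+\cap V^+\not=\es$ becomes equivalent to the relation inclusion. You instead invoke, as a black box, the undecidability of inclusion between arbitrary rational relations and give a generic marker reduction $(w,w')\mapsto(aw,bw')$ unioned with the full diagonal; the fresh symbols make $\rho'_i$, $(\rho'_i)^{-1}$ and $\diag$ pairwise disjoint, so the arbitrary inclusion $\rho_2\sse\rho_1$ can be read off the $\edp$-containment. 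Your route is more modular and shows something slightly stronger---that $\edp$-containment is exactly as hard as rational-relation inclusion in general---at the cost of citing that undecidability result rather than deriving it; the paper's route is self-contained down to PCP but tied to one specific pair of relations. Both arguments are sound.
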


Before we present the proof, we establish a few  necessary auxiliary results.

\begin{lemma}\label{lem:pties}
For any input-preserving transducers $\trt,\trs$ we have
\[
\edp_{\trt}\sse\edp_{\trs}\>\hbox{ if and only if }\>
\rel(\trs\lor\trs^{-1})\sse\rel(\trt\lor\trt^{-1}).
\]
\end{lemma}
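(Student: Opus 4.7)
The plan is to reduce both sides of the equivalence to a single combinatorial condition on pairs of words. First I would establish a characterization lemma: for any input-preserving transducer $\trt$, a language $L$ lies in $\edp_{\trt}$ exactly when no pair of distinct $L$-words lies in the symmetric relation $R_{\trt}=\rel(\trt\lor\trt^{-1})=\rel(\trt)\cup\rel(\trt^{-1})$. Unfolding Eq.~(\ref{eqIPT}), $L\in\edp_{\trt}$ says that for every $w\in L$ and every $v\in L$ with $v\neq w$, $(w,v)\notin\rel(\trt)$. Since the quantifier ranges over all ordered pairs of distinct $L$-words, the conditions $(v,w)\notin\rel(\trt)$ and $(w,v)\notin\rel(\trt)$ must both hold, which is exactly $(v,w)\notin R_{\trt}$.

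Given the characterization, the $(\Leftarrow)$ direction is immediate: if $R_{\trs}\sse R_{\trt}$ and $L\in\edp_{\trt}$, then no distinct pair from $L\times L$ lies in $R_{\trt}$, so none lies in $R_{\trs}$ either, hence $L\in\edp_{\trs}$. For the $(\Rightarrow)$ direction I would argue by contrapositive, using a two-word witness. Pick $(v,w)\in R_{\trs}\setminus R_{\trt}$, and (see the caveat below) arrange that $v\neq w$. Then set $L=\{v,w\}$: its only pair of distinct elements is $(v,w)$, which is outside $R_{\trt}$, so $L\in\edp_{\trt}$; and $(v,w)\in R_{\trs}$ shows $L\notin\edp_{\trs}$, witnessing $\edp_{\trt}\not\sse\edp_{\trs}$.

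The main subtle point is the reduction to the off-diagonal case in the contrapositive: a diagonal pair $(v,v)\in R_{\trt}$ records only that $v$ is in the domain of $\trt$, and such pairs do not influence $\edp_{\trt}$ at all (the defining quantifier excludes $v=w$). So the characterization, and hence the equivalence, really concerns the off-diagonal parts of $R_{\trt}$ and $R_{\trs}$; the statement as written is nevertheless fine when either the two transducers have the same domain or we read relation containment up to the diagonal, which is the setting in which the lemma is applied inside the PCP reduction of Theorem~\ref{th:nonsim}. With this understood, the proof is short, and its conceptual content is concentrated in the symmetric-relation characterization of the first paragraph.
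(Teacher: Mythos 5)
Your argument follows essentially the same route as the paper's: both hinge on rewriting Eq.~(\ref{eqIPT}) in the symmetric form ``no two distinct $L$-words are related by $\rel(\trt)\cup\rel(\trt^{-1})$'', from which the backward implication is monotonicity of $\edp$ under relation containment and the forward implication comes from a two-element witness language. The paper records the monotonicity fact and the identities $\edp_{\trs}=\edp_{\trs^{-1}}=\edp_{\trs\lor\trs^{-1}}$ but compresses the rest into ``standard set theoretic arguments''; your explicit $L=\{v,w\}$ construction is exactly what that phrase hides, so if anything you have been more complete. Your caveat about diagonal pairs is also well taken and points at a genuine imprecision in the lemma as literally stated: for an input-preserving $\trs$ the diagonal of $\rel(\trs\lor\trs^{-1})$ is precisely $\{(v,v)\mid v\in\mathrm{dom}(\rel(\trs))\}$, and since diagonal pairs are invisible to $\edp_{\trs}$, the forward implication can fail when the domains are incomparable (e.g., let $\trt$ and $\trs$ realize the identity relation on two different finite languages over the same alphabet; then $\edp_{\trt}=\edp_{\trs}$ is the set of all languages, yet $\rel(\trs\lor\trs^{-1})\not\sse\rel(\trt\lor\trt^{-1})$). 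The containment should therefore be read modulo the diagonal, or the domains assumed equal; as you observe, the transducers constructed in the reduction of Lemma~\ref{lem:undecide} have equal domains, so Theorem~\ref{th:nonsim} is unaffected.
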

\begin{proof}
First note that Eq.~(\ref{eqIPT}) is equivalent to
\[
(w,v)\in\rel(\trt)\>\hbox{ implies }\>w=v,\>\>\hbox{ for all $v,w\in L$.}
\]
This implies that, for all input-preserving transducers $\trt,\trs$, we have
$\edp_{\trs}=\edp_{\trs^{-1}}$ and
\begin{eqnarray}
\rel(\trs)\sse\rel(\trt)&\hbox{ implies }&\edp_{\trt}\sse\edp_{\trs},\\
\edp_{\trs}&=& \edp_{\trs\lor\trs^{-1}}.
\end{eqnarray}
The statement of the lemma follows from the above observations using
standard set theoretic arguments.
\end{proof}

\begin{lemma}\label{lem:undecide}
The following problem is undecidable.
\pssi Input: input-preserving transducers $\trt,\trs$
\pssi Return: whether $\rel(\trs\lor\trs^{-1})\sse\rel(\trt\lor\trt^{-1})$
\end{lemma}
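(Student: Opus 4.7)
The plan is to reduce the classical inclusion problem for rational relations---known to be undecidable by a reduction from the Post Correspondence Problem~\cite{Be:1979}---to the problem of Lemma~\ref{lem:undecide}. More precisely, I use the fact that it is undecidable, given two transducers $\trs_0,\trt_0$ sharing a common input alphabet $\Sigma$ and common output alphabet $\Delta$ with $\Sigma\cap\Delta=\emptyset$, to decide whether $\rel(\trs_0)\sse\rel(\trt_0)$. The standard PCP-based construction already uses disjoint alphabets (one for the index set and one for the word alphabet), so the disjoint-alphabet form of the inclusion problem is available.

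Given such an instance $\trs_0,\trt_0$, I set $\Gamma=\Sigma\cup\Delta$ and construct input-preserving transducers $\trs,\trt$ over $\Gamma$ realizing
\[
\rel(\trs)=\rel(\trs_0)\cup\sett{(w,w)}{w\in\Sigma^*}\quad\text{and}\quad
\rel(\trt)=\rel(\trt_0)\cup\sett{(w,w)}{w\in\Sigma^*}.
\]
The identity relation on $\Sigma^*$ is realized by a single-state transducer with loops $a/a$ for every $a\in\Sigma$, so $\trs$ and $\trt$ can be built effectively using the $\lor$ construction of Section~\ref{sec:terminology}. Each has domain equal to $\Sigma^*$ and contains the identity on $\Sigma^*$, so each is input-preserving. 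Direct computation then gives
\[
\rel(\trs\lor\trs^{-1})=\rel(\trs_0)\cup\rel(\trs_0)^{-1}\cup\sett{(w,w)}{w\in\Sigma^*},
\]
and symmetrically for $\trt\lor\trt^{-1}$.

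The crux is that, because $\Sigma\cap\Delta=\emptyset$, the three summands above live in pairwise disjoint subsets of $\Gamma^*\times\Gamma^*$---namely $\Sigma^*\times\Delta^*$, $\Delta^*\times\Sigma^*$, and $\Sigma^*\times\Sigma^*$---and likewise for $\trt\lor\trt^{-1}$. Consequently $\rel(\trs\lor\trs^{-1})\sse\rel(\trt\lor\trt^{-1})$ is equivalent to the conjunction of $\rel(\trs_0)\sse\rel(\trt_0)$ and $\rel(\trs_0)^{-1}\sse\rel(\trt_0)^{-1}$, together with the trivial inclusion $\sett{(w,w)}{w\in\Sigma^*}\sse\sett{(w,w)}{w\in\Sigma^*}$. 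Since the first two inclusions are obviously equivalent to each other, the symmetric inclusion holds if and only if $\rel(\trs_0)\sse\rel(\trt_0)$ does, establishing undecidability.

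The main obstacle I anticipate is precisely this separation: symmetrizing a relation can in general create spurious containments, and one must prevent the ``forward'' graph, the ``backward'' graph, and the identity component from interfering with each other. The disjoint-alphabet device is what makes the argument go through cleanly---it confines each of the three pieces to its own corner of $\Gamma^*\times\Gamma^*$, so that the inclusion of symmetric closures decomposes into the original inclusion of rational relations. Without this device one would have to argue directly about symmetric rational relations, a considerably more delicate task.
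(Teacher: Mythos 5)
Your reduction is sound in outline, and it differs from the paper's mainly in packaging: you factor the argument through the classical undecidability of inclusion of rational relations (itself obtained from PCP), whereas the paper reduces from PCP directly, building $X=(\ew,aa)U^+\cup D$ and $Y=((C^+\times aaB^+)-(\ew,aa)V^+)\cup D$ with $D$ a diagonal over the common domain $C^+\cup aaB^+$. Both proofs rest on the same two devices---pad with a diagonal to force input-preservation, and keep the ``forward,'' ``backward,'' and diagonal components from colliding under symmetrization. The paper achieves the separation with the $(\ew,aa)$ output marker, which lets it avoid any assumption on $B\cap\{a,b\}$; you achieve it by insisting on disjoint input and output alphabets. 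Your version is more modular; the paper's is more self-contained and alphabet-agnostic.

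There is, however, one genuine (though easily repaired) flaw: the claim that the three summands live in \emph{pairwise disjoint} subsets $\Sigma^*\times\Delta^*$, $\Delta^*\times\Sigma^*$, $\Sigma^*\times\Sigma^*$ is false, because $\Sigma^*\cap\Delta^*=\{\ew\}$, not $\es$; all three sets contain $(\ew,\ew)$. Chasing the cases shows that $(\ew,\ew)$ is in fact the only pair that can defeat your decomposition: if $(\ew,\ew)\in\rel(\trs_0)\sm\rel(\trt_0)$, then the symmetrized inclusion can still hold (the pair is absorbed by $\diag(\Sigma^*)$ on the right-hand side) even though $\rel(\trs_0)\sse\rel(\trt_0)$ fails, so the two sides of your claimed equivalence are not equivalent in general. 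You must therefore restrict to instances with $\rel(\trs_0),\rel(\trt_0)\sse\Sigma^+\times\Delta^+$ (equivalently, neither relation contains $(\ew,\ew)$). This costs nothing, since the PCP-derived hard instances $U^+$ and $(C^+\times B^+)-V^+$ already have this form (the $u_i,v_i$ are nonempty and the index words are in $C^+$), but it needs to be stated: as written, ``pairwise disjoint'' and the ensuing clean decomposition into three independent inclusions are not correct.
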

\begin{proof}
We reduce the Post Correspondence Problem (PCP) to the given problem.
Consider any instance $((u_i)_1^p,\,(v_i)_1^p)$ of PCP which is a pair
of sequences of $p$ nonempty words over some alphabet $B$, for some positive integer $p$.
As mentioned before, we use tools that have been used in showing the
undecidability of basic transducer problems \cite{Be:1979}. In particular,
we have that the given instance is a YES instance if and only if $U^+\cap V^+\not=\es$,
where
\[
U=\{(ab^i,u_i)\mid i=1,\ldots,p\}\>\hbox{ and }\>
V=\{(ab^i,v_i)\mid i=1,\ldots,p\},
\]
and we make no assumption about the intersection of the alphabets $B$ and $\{a,b\}$.
Here we define the following objects
\begin{eqnarray}
C &=& \{ab,ab^2,\ldots,ab^p\}\\
\diag(L) &=& \{(x,x)\mid x\in L\},\>\hbox{for any language $L$}\\
D &=& \diag(C^+)\cup\diag(aaB^+)\\
X &=& (\ew,aa)U^+\cup D\\
Y &=& ((C^+\times aaB^+)-(\ew,aa)V^+)\cup D\\
\trt &=& \>\hbox{ any transducer realizing X}\\
\trs &=& \>\hbox{ any transducer realizing Y}
\end{eqnarray}
We use rational relation theory to show the following claims, which establish the
required reduction from PCP to the given problem.
\pssi C1: $X$ and $Y$ are rational relations
\pssi C2: $\trt$ and $\trs$ are input preserving
\pssi C3: $U^+\cap V^+\not=\es\>$ if and only if $(X\cup X^{-1})\sse(Y\cup Y^{-1})$
\pssn
\emph{Claim C1:}
First note that, as $C^+$ and $aaB^+$ are regular languages, we have
that $D$ is a rational relation. In \cite{Be:1979}, the author shows that $U^+$ and
$((C^+\times B^+)-V^+)$ are rational relations. It follows then that $X$ is a rational relation.
Similarly, rational is also the relation
\[
(\ew,aa)((C^+\times B^+)-V^+)=((C^+\times aaB^+)-(\ew,aa)V^+),
\]
which implies that  $Y$ is rational as well.
\pssn
\emph{Claim C2:}
From the previous claim there are transducers (in fact effectively constructible)
$\trt$ and $\trs$ realizing $X$ and $Y$, respectively. Note that the domains of
both transducers are equal to $C^+\cup aaB^+$. The fact that $(x,x)\in\rel(\trt)\cap\rel(\trs)$
for all $x\in C^+\cup aaB^+$
implies that both transducers are indeed input-preserving.
\pssn
\emph{Claim C3:}
First it is easy to confirm that $X^{-1}\sse Y^{-1}$ if and only if $X\sse Y$ (in fact for
any relations $X$ and $Y$). Also the fact that $C^+\cap aaB^+=\es$ implies that
$(\ew,aa)U^+$ is disjoint from the sets
\[
((\ew,aa)U^+)^{-1},\>D,\>((C^+\times aaB^+)-(\ew,aa)V^+)^{-1}
\]
and similarly $((C^+\times aaB^+)-(\ew,aa)V^+)$ is disjoint from the same sets.
The above observations imply that
\pmsi $(X\cup X^{-1})\sse(Y\cup Y^{-1})$ if and only if
\pssi $(\ew,aa)U^+\sse ((C^+\times aaB^+)-(\ew,aa)V^+)$ if and only if
\pssi $(\ew,aa)U^+ \cap (\{a,b\}^*\times B^*)-((C^+\times aaB^+)-(\ew,aa)V^+)=\es$ if and only if
\pssi $(\ew,aa)U^+ \cap (\ew,aa)V^+=\es$ if and only if
\pssi $U^+\cap V^+=\es$, as required.
\end{proof}

\begin{proof} (of Theorem~\ref{th:nonsim}.)
For the sake of contradiction, assume there is a complete simulation $(G,\sop,\po,\rep{\>},\phi)$ of the given set of properties. Then, for
all $x,y\in\gene G$, we have
\[
\rep x\sse\rep y\>\hbox{ implies }\> x\po y.
\]
We derive a contradiction by showing that the problem in Lemma~\ref{lem:undecide}
is decidable as follows.
\pmsi 1. Let $x = \phi(\trt)$
\pssi 2. Let $y = \phi(\trs)$
\pssi 3. if $y\po x$: return YES
\pssi 4. else: return NO
\pmsn
The correctness of the `if' clause is established as follows: $y\po x$ implies
$\rep y\sse\rep x$, which implies $\edp_{\trt}\sse\edp_{\trs}$, and then
$\rel(\trs\lor\trs^{-1})\sse\rel(\trt\lor\trt^{-1})$, as required.
The correctness of the `else' clause is established as follows:
first note $y\not\po x$. We need to show $\rel(\trs\lor\trs^{-1})\not\sse\rel(\trt\lor\trt^{-1})$.
For the sake of contradiction, assume the opposite. Then, $\edp_{\trt}\sse\edp_{\trs}$,
which implies $\rep y\sse\rep x$, and then (by completeness) $y\po x$, which is a contradiction.
\end{proof}

%%%%%%%%%%%%%%%%%%%%%%%%%%%%%%%%%%%%%%%%%%
\section{Methods of Code Property Objects}\label{sec:code-methods}
%
%\pmsn
In the context of the research on code properties, we consider the following three algorithmic
problems as fundamental.
\begin{description}
\item[\emdef{Satisfaction problem}.] Given the description of a code property and the description of a language, decide whether
the language satisfies the property. In the \emdef{witness version} of this problem, a negative answer is also
accompanied by  an appropriate set of  words showing how the property is violated.
\item[\emdef{Maximality problem}.] Given the description of a code property and the description of a language $L$,
decide whether the language is maximal with respect to the property. In fact we allow the more general problem,
where the input includes also the description of a second language $M$ and the question is whether there is no
word $w\in M\setminus L$ such that $L\cup w$ satisfies the property. The default case is when $M=\al^*$.
In the \emdef{witness version} of this problem, a negative answer is also
accompanied by any word $w$ that can be added to the language $L$.
\item[\emdef{Construction problem}.] Given the description of a code property and two positive integers $n$ and $\ell$, construct
a language that satisfies the property and contains $n$ words of length $\ell$ (if possible).
\end{description}
It is assumed that the code property can be implemented as \ppi  via a transducer $\trt$
(whether input-altering or input-preserving)
and, in the first two problems, the language is given via an NFA $\aut$.
In the maximality problem, the second language $M$ is given via
an NFA $\autb$. In fact one can give a language via a
regular expression, in which case it is converted to an automaton.
Next we present the implementation of methods for the satisfaction and maximality problems.
We discuss briefly the construction problem in the last section of the paper.
\pbsn
\emshort{Methods} \texttt{p.satisfiesP(a)}
\pmsn
The satisfaction problem is decidable in time $O(\sz{\trt}\sz{\aut}^2)$, if  the property \ppi is of the
input-altering transducer type. This follows from Eq.~(\ref{eqIAT}) and can be implemented %in FAdo 
as follows
\begin{verbatim}
    c = t.runOnNFA(a)
    return (a & c).emptyP()
\end{verbatim}
For the case of \ppi being the error-detecting property, the  transducer $\trt$ is input-preserving and
Eq.~(\ref{eqIPT}) is decided via a transducer functionality test~\cite{Kon:2002}.
In FAdo this test can be implemented as follows, where the method \texttt{functionalP()}
returns whether the transducer is functional.
\begin{verbatim}
    s = t.inIntersection(a)
    return s.outIntersection(a).functionalP()
\end{verbatim}
\begin{comment}
The time complexity for the satisfaction of error-detection is $O(\sz{\trt}^2\sz{\aut}^4)$, as the transducer functionality
algorithm is used on a transducer of size $O(\sz{\trt}\cdot\sz{\aut}^2)$---this is the transducer that results
by intersecting the given transducer twice with the given automaton.
%
\pmsn
\end{comment}
For the case of \ppi being the error-correcting property, again the given transducer is input-preserving and
Eq.~(\ref{eqEC}) is decided via a transducer functionality test~\cite{Kon:2002}.
In FAdo this test can be implemented as follows
\begin{verbatim}
    s = t.inverse()
    return s.outIntersection(a).functionalP()
\end{verbatim}
\begin{comment}
The time complexity for the satisfaction of error-correction is $O(\sz{\trt}^2\sz{\aut}^2)$, as the transducer functionality
algorithm is used on a transducer of size $O(\sz{\trt}\cdot\sz{\aut})$.
%
%
\pbsn
\end{comment}\pmsn
\emshort{Method} \texttt{p.maximalP(a, b)}
\pmsn
The maximality problem is decidable but PSPACE-hard \cite{DudKon:2012}. In particular, for the case of
both input-altering transducer and error-detecting properties, the decision algorithm is very simple:
the language $\lang(\aut)$
is \ppin-maximal  (within $\lang(\autb)$) if and only if
\begin{equation}\label{eq-max}
\lang(\autb)\setminus (\lang(\aut)\cup\trt(\aut)\cup\trt^{-1}(\aut))=\emptyset.
\end{equation}
The above emptiness test is implemented in the method \texttt{p.maximalP(a, b)}, which returns \true or \falsen,
and uses standard NFA methods as well as the transducer methods \texttt{t.inverse()} and
\texttt{t.runOnNFA(a)}.
For the case of error-correcting properties, our implementation makes  use of Remark~\ref{remECvsED}.
\pbsn
\emshort{Methods with witnesses:} \texttt{p.notMaximalW(a, b)}
\pmsn
It can be shown that any word belonging to the set shown in Eq.~(\ref{eq-max}) can be added to $\lang(\aut)$ and
the resulting language will still satisfy the property \cite{DudKon:2012}. This word can serve as a witness
of the non-maximality of $\lang(\aut)$. If no such word exists, the method returns \nonen.
\pbsn
\emshort{Methods with witnesses:} \texttt{p.notSatisfiesW(a)}
\pmsn
For input-altering transducer and error-detecting properties,
the witness version of the method \texttt{p.satisfiesP(a)}
 returns either a pair of {\em different} words
$u,v\in\lang(\aain)$
violating the property, that is, $v\in\ttin(u)$ or $u\in\ttin(v)$, or they return the pair $(\nonen, \nonen)$.
In the former case, the pair $(u,v)$ is called a
\emdef{witness of the non-satisfaction of \ppi by} the language $\lang(\aain)$.
For  error-correcting properties \ppin, a witness of non-satisfaction by $\lang(\aain)$ is a triple
of words $(z,u,v)$ such that $u,v\in\lang(\aain)$ and $u\not=v$ and $z\in\ttin(u)\cap\ttin(v)$.
Next we discuss how to accomplish this by
changing the implementations of \texttt{p.satisfiesP(a)} shown before.
\pmsn
\begin{description}
\item[Case 1:] For  input-altering transducer properties, we have the Python code
\begin{verbatim}
    return t.inIntersection(a).outIntersection(a).nonEmptyW()
\end{verbatim}
Recall from  Section~\ref{sec:transducers}, the above returns  (if possible) a witness for the nonemptiness of
the transducer \tti when the input and output parts of \tti are intersected by the language $\lang(\aain)$.
This witness corresponds to the emptiness test in Eq.~(\ref{eqIAT}), as required.
\pmsn
\item[Case 2:] For error-detecting properties, the defining transducer is a channel (input-preserving)
and, therefore, we use the method \texttt{nonFunctionalW()} instead of \texttt{nonEmptyW()}.
More specifically, we use the code
\begin{verbatim}
  u, v, w = t.inIntersection(a).outIntersection(a).nonFunctionalW()
  if u == v:
      return u, w
  else:
      return u, v
\end{verbatim}
If  the method \texttt{nonFunctionalW()} returns a triple of words
$(u,v,w)$ then, by Proposition~\ref{resNonFunc} and the definitions of the
\texttt{in/out} intersection methods,
we have that $v\not=w$, $v\in\ttin(u)$, $w\in\ttin(u)$ and all three words are in $\lang(\aain)$.
This implies that at least one of $u\not=v$ and $u\not=w$ must be true and, therefore,
the returned value is the pair $(u,v)$ or $(u,w)$. Moreover, the returned pair indeed violates the
property. Conversely, if the non-functionality method returns a triple of \nonen{}s then the constructed
transducer is not functional.
%and, therefore, there are no words $u,v,w\in \lang(\aain)$ such that $v\in\ttin(u)$ and $w\in\ttin(u)$ and $v\not=w$.
Then $\lang(\aain)$ must satisfy the property, otherwise any different words $v,w\in\lang(\aain)$ violating the
property could be used to make the triple $(v,v,w)$, or $(w,w,v)$,  which would serve as a witness of the non-functionality
of the transducer.
\pmsn
\item[Case 3:] For error-correcting properties, we use again the
non-functionality witness method. %\texttt{nonFunctionalW()}.
\begin{verbatim}
    return t.inverse().outIntersection(a).nonFunctionalW()
\end{verbatim}
For the correctness of this algorithm, one argues similarly as in the previous case.
\end{description}
\begin{comment}
If the method \texttt{nonFunctionalW()} returns a triple of words
$(z,u,v)$ then $u,v\in\lang(\aain)$, $u\not=v$ and $u,v\in\ttin^{-1}(z)$ which implies that
 $z\in\ttin(u)\cap\ttin(v)$ and, therefore, $\lang(\aain)$ is not error-correcting for \ttin.
Conversely, if the non-functionality method returns a triple of \nonen{}s then the constructed
transducer is not functional and $\lang(\aain)$ must satisfy the property, otherwise any
triple of words $(z,u,v)$...
\end{comment}
 \pmsn
The above discussion establishes the following consequence of Proposition~\ref{resNonFunc} and
the definitions of product constructions in Section~\ref{sec:transducers}.
\begin{proposition}
The algorithms implemented in the three forms (input-altering transducer, error-detecting, error-correcting) of
the method {\rm \texttt{p.notSatisfiesW(a)}} correctly
returns a witness of the non-satisfaction of the  property {\rm\ppin} by the language {\rm $\lang(\aain)$}.
\end{proposition}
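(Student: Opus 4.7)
My plan is to prove the Proposition by a straightforward case analysis on the three property types, reducing each to a statement about the relation realized by the constructed transducer and invoking either the semantics of \texttt{nonEmptyW()} or Theorem~\ref{resNonFunc} (for \texttt{nonFunctionalW()}). In every case, the core step is to verify that the relation realized by the transducer built by the code is precisely the set whose (non)emptiness or (non)functionality characterizes the (non)satisfaction of the property.

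For Case~1 (input-altering transducer properties), I would first observe, using the definitions of \texttt{inIntersection} and \texttt{outIntersection}, that the constructed transducer \texttt{t.inIntersection(a).outIntersection(a)} realizes exactly the relation $\{(u,v)\mid u,v\in\lang(\aut),\, v\in\trt(u)\}$. By Eq.~(\ref{eqIAT}), $\lang(\aut)$ satisfies $\iatp_{\trt}$ if and only if this relation is empty (using the fact that $\trt$ is input-altering, so $u\ne v$ whenever $(u,v)$ lies in the relation). Thus \texttt{nonEmptyW()} either returns a pair $(u,v)$ of necessarily distinct words in $\lang(\aut)$ with $v\in\trt(u)$, which is a witness, or \texttt{(None,None)}, in which case the property is satisfied.

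For Case~2 (error-detecting properties), let $\trs = \texttt{t.inIntersection(a).outIntersection(a)}$. Its realized relation is again $\{(u,v)\mid u,v\in\lang(\aut),\, v\in\trt(u)\}$, but now $\trt$ is input-preserving, so every $u\in\lang(\aut)\cap\mathrm{dom}(\trt)$ gives $(u,u)\in\rel(\trs)$. I would argue the key equivalence: $\lang(\aut)$ satisfies $\edp_{\trt}$ if and only if $\trs$ is functional. The forward direction uses Eq.~(\ref{eqIPT}) to show that $(u,v)\in\rel(\trs)$ forces $u=v$; the backward direction notes that any violation of the property would give distinct $u,v\in\lang(\aut)$ with $v\in\trt(u)$, producing $(u,u),(u,v)\in\rel(\trs)$, contradicting functionality. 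Then Theorem~\ref{resNonFunc} produces a triple $(u,v,w)$ with $v\ne w$ and $v,w\in\trs(u)$; hence $v,w\in\lang(\aut)\cap\trt(u)$ and $u\in\lang(\aut)$. Since $v\ne w$, at least one of $u\ne v$ or $u\ne w$ holds, justifying the \texttt{if u==v} branch that returns either $(u,v)$ or $(u,w)$ as a genuine witness.

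For Case~3 (error-correcting properties), I would compute that \texttt{t.inverse().outIntersection(a)} realizes $\{(z,u)\mid u\in\lang(\aut),\, z\in\trt(u)\}$, and then argue analogously: this transducer is non-functional iff there exist distinct $u,v\in\lang(\aut)$ and a common $z\in\trt(u)\cap\trt(v)$, which by Eq.~(\ref{eqEC}) is precisely the negation of $\ecp_{\trt}$. The triple $(z,u,v)$ returned by \texttt{nonFunctionalW()} is then itself the required witness. I expect the main obstacle of the proof to be the bookkeeping in Case~2: one must carefully justify why at least one of the two candidates returned by the code is guaranteed to constitute a valid witness pair of \emph{distinct} words in $\lang(\aut)$, and why the converse (returning \nonen{}s means the property holds) follows from the correctness clause of Theorem~\ref{resNonFunc}. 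The remaining cases are essentially direct corollaries of the semantic correspondence together with Theorem~\ref{resNonFunc}.
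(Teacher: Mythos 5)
Your proposal is correct and follows essentially the same route as the paper: the paper's justification is exactly the case-by-case discussion preceding the proposition, identifying the relation realized by the composed transducer, reducing satisfaction to emptiness (Case 1) or functionality (Cases 2 and 3), and invoking Theorem~\ref{resNonFunc} together with the observation that $v\not=w$ forces at least one of $u\not=v$, $u\not=w$. Your slightly more explicit statement of the ``satisfies iff functional'' equivalence in both directions is a faithful expansion of the same argument rather than a different one.
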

\begin{EX}\label{exIPython1}
The following Python interaction shows that the language $a^*b$ is a prefix  and
1-error-detecting code. Recall from previous examples that the Python strings \texttt{st, s1}
contain, respectively, the descriptions of an NFA accepting $a^*b$, and
%a transducer that outputs any proper suffix of a given input word, and
a channel transducer that performs up to one substitution error when  given an input word.
\begin{verbatim}
>>> a = fio.readOneFromString(st)
>>> pcp = codes.buildPrefixProperty({'a','b'})
>>> s1dp = codes.buildErrDetectPropS(s1)
>>> p2 = pcp & s1dp
>>> p2.notSatisfiesW(a)
(None, None)
\end{verbatim}
\end{EX}

%%%%%%%%%%%%%%%%%%%%%%%%%%%%%%%%%%%%%%%%%%%%%%%%%%%%%%%%%
\section{Uniquely Decodable/Decipherable Codes}\label{sec:UDCode}
%History \cite{Sardinas:Patterson,Markov:62,Levenshtein:62,Hartnett:68}.  More systematic \cite{Head:Weber:decision,Jurg:1999,Dom:2004,DudKon:2012}
The property of unique decodability or decipherability, \emdef{UD code property} for short, is probably the
first historically property of interest in coding theory from the points of view of both
information theory \cite{Sardinas:Patterson}
as well as formal language theory \cite{Sch:1955,Niv:1966}.
A language $L$ is said to be a UD code if, for any two sequences
$(u_i)_1^n$ and $(v_j)_1^m$ of $L$-words such that $u_1\cdots u_n=v_1\cdots v_m$, we have that
$n=m$ and the two sequences are identical. In simpler terms, every word in $L^*$ can be
decomposed  in exactly one way
as a sequence of $L$-words. In this section we describe our implementation of the satisfaction and maximality methods for the UD~code property, as the corresponding methods for error-detecting properties discussed above are not applicable to the UD~code property. 
\begin{remark}\label{rem:UDcodes}
	In \cite{JuKo:handbook}, as a consequence of a more general result, it is shown that the UD~code property is not an $n$-independence for any $n<\aleph_0$. Thus, this property is not an error-detecting property, so it is not describable by any input-preserving transducer. A specific argument showing that this property is not a 3-independence is as follows: the language $L=\{a,ab,ba\}$ is not a UD~code, as $a(ba)=(ab)a$, but every subset     of $L$ having $<3$ elements is a UD~code.
\end{remark}
We assume that \aai is an NFA object without $\ew$-transitions.
One can create the UD code property using the following syntax
\begin{verbatim}
    p = codes.buildUDCodeProperty(a.Sigma)
\end{verbatim}
\pssn
\emshort{The method} \texttt{p.notSatisfiesW(a)}\pnsn
The satisfaction problem for this property was discussed first in the well known paper
\cite{Sardinas:Patterson}, where the authors produce a necessary and sufficient condition
for a finite language to be a UD code---we note that some feel that that condition does not
clearly give an algorithm, as for instance the papers \cite{Markov:62,Levenshtein:62}.
Over the years people have established faster algorithms and generalized the problem
to the case where the language in question is regular. To our knowledge, the asymptotically
most efficient algorithms for regular languages are given in \cite{Head:Weber:decision,McC:1996}, and they are
both of quadratic time complexity. Our implementation follows the algorithm in \cite{Head:Weber:decision},
as that approach makes
use of the transducer functionality algorithm. As before we enhance that algorithm to
produce a \emdef{witness of non-satisfaction} which,
given an NFA object $\aain$, it either returns the pair
(\nonen, \nonen) if $\lang(\aain)$ is
a UD code, or a pair of two different lists of $\lang(\aain)$-words such that the concatenation of the words
in each list produces the same word (if  $\lang(\aain)$ is not
a UD code).
\pssn
We now describe the algorithm in \cite{Head:Weber:decision} modified appropriately to return
a witness of non-satisfaction. Again, the heart of the algorithm is the transducer functionality
method. Let $\aut = (Q,\al,T,I,F)$ be the given NFA (with no $\ew$-transitions).
\begin{enumerate}
\item
If any of the initial states is final, then (as $\ew$ is in the language) return $(\,[\ew], [\ew,\ew]\,)$.
\item
Construct the transducer $\trt = (Q,\al,\{0,1\},T',I,I)$ in which the transition set $T'$ is defined by the
following process: If $(p,\sigma,q)\in T$ then $(p,\sigma/0,q)\in T'$ and, if in addition $q\in F$ then
also $(p,\sigma/1,i)\in T'$, for every $i\in I$. Note that the domain of the transducer
is exactly the language $\lang(\aut)^*$.
\item
Let \texttt{w, x, y = t.nonFunctionalW(a)}
\item
If any of \texttt{w, x, y} is \none then return (\nonen, \nonen)
\item
At this point, we know that $w\in\lang(\aut)$, $x$ and $y$ are different and each one is of the form
$$0^{r_1}1\cdots 0^{r_n}1.$$
Moreover, each of $x$ and $y$ corresponds to a decomposition of $w$ in terms of $\lang(\aut)$-words.
More specifically, the binary word $0^{r_1}1\cdots 0^{r_n}1$ encodes a sequence of
words $(w_i)_1^n$
such that their concatenation is equal to $w$ and each $w_i$ is the infix of $w$ that
starts at position $s_i$ and ends at position $s_i+r_i$, where $s_1=0$ and $r_i=|w_i|-1$
and $s_{i+1}=s_i+r_i+1$.
For example, if $w=ababab$ and $x=010001$, then the decomposition is $(ab,abab)$.
The algorithm returns the two lists of words corresponding to $x$ and $y$.
\end{enumerate}
\begin{EX}
The following Python interaction  produces a witness of the non-satisfaction of the UD code property by
the language $\{ab, abba, bab\}$.
\begin{verbatim}
>>> L = fl.FL(['ab', 'abba', 'bab'])
>>> a = L.toNFA()
>>> p = codes.buildUDCodeProp(a.Sigma)
>>> p.notSatisfiesW(a)
(['ab', 'bab', 'abba', 'bab'], ['abba', 'bab', 'bab', 'ab'])
\end{verbatim}
The two word lists are different, but the concatenation of the words in each list produces
the same word.
\end{EX}
\pssn
\emshort{The method} \texttt{p.maximalP(a)}\pssn
This method is based on the fundamental theorem of Schutzenberger \cite{BePeRe:2009} that a regular
language $L$ is a UD code if and only if the set of all infixes of $L^*$ is equal to $\al^*$.
Using the tools implemented in FAdo this test can be performed as follows.
\begin{verbatim}
   t = infixTransducer(a.Sigma, True)
   b = a.star()
   return (~(t.runOnNFA(b))).emptyP()
\end{verbatim}
The first statement above returns a transducer $\trt$ that, on input $w$, outputs any infix of $w$.
The second statement returns an NFA accepting $\lang(\aain)^*$, and the last statement
returns whether there is no word in the  complement of all infixes of $\lang(\aut)^*$.

%%%%%%%%%%%%%%%%%%%%%%%%%%%%%%%%%%%%%%%%%%%%%%%%%%%%%%%%%
\section{LaSer and Program Generation}\label{sec:laser}
The first version of LaSer \cite{DudKon:2012} was a self-contained set of C++ automaton and transducer methods as well as a set
of Python and HTML documents with the following functionality: a client uploads  a file containing an automaton in
Grail format and a file containing either a trajectory automaton, or an input altering-transducer,
and LaSer would
respond with an answer to the witness version of the satisfaction problem for input-altering transducer properties.
The new version, which we discuss here,
is based on the FAdo set of automaton and transducer methods and allows clients to request a response about
the witness versions of the satisfaction and maximality problems for input-altering transducer, error-detecting
and error-correcting properties.
\pmsn
We call the above type of functionality, where LaSer computes and returns the answer, the \emdef{online service}
of LaSer. Another  feature of the new version of LaSer, which we believe to be original in the community of software on automata and formal languages,
is the \emdef{program generation service}. This is the capability to generate a self-contained Python program that can be
downloaded on the client's machine
and executed on that machine returning thus the desired answer. This feature is useful as the execution of certain algorithms, even of
polynomial time complexity---see the error-detection satisfaction problem--, can be quite time consuming for
a server software like LaSer.
\pssn
The user interface of LaSer is very simple---see Fig.~\ref{figInterface}.
\begin{figure}[!ht]
\begin{center}
%\includegraphics[scale=.5]{laser.eps}
%\scalebox{0.8}{\includegraphics*[84pt,560pt][500pt,300pt]{laser.eps}}
\scalebox{0.8}{\includegraphics[trim = 0.5in 5.75in 0.5in 1in, clip]{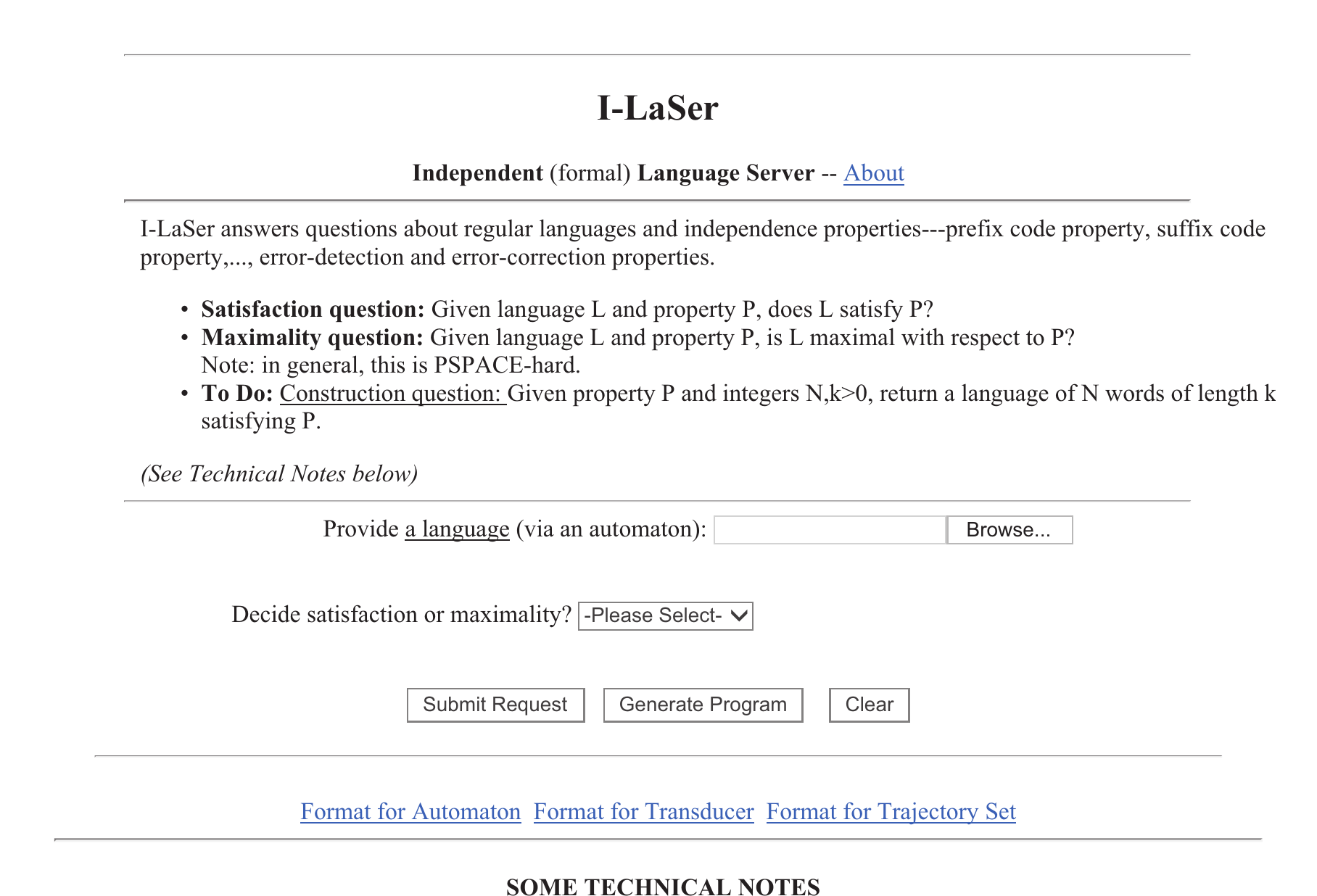}}
\parbox{0.85\textwidth}{\caption{The main user interface of LaSer}\label{figInterface}}
\end{center}
\end{figure}
The user provides the file containing the automaton whose language will be tested
for a question (satisfaction or maximality) about some property. When the decision question
is chosen, then the user is asked to enter the type of property, which can be one of
 fixed, trajectory, input-altering transducer, error-detecting, error-correcting.
 Then, the user either clicks on the button ``Submit Request'' or on the button
 ``Generate Program''.
\pmsn
Next we present some parts of the program generation module. The program to be generated will
contain code to answer one of the satisfaction or maximality questions for any of the properties
regular trajectory (TRAJECT), input-altering transducer (INALT), error-detecting (ERRDET), error-correcting (ERRCORR),
or any of the fixed properties. First a Python dictionary is prepared to enable easy reference to the
property type to be processed, and another dictionary for the question to be answered, as follows
\begin{verbatim}
buildName = {"CODE": ("buildUDCodeProperty", ["a.Sigma"], 1),
             "ERRCORR": ("buildErrorCorrectPropS", ["t"], 1),
             "ERRDET": ("buildErrorDetectPropS", ["t"], 1),
             "HYPER": ("buildHypercodeProperty", ["a.Sigma"], 1),
             "INALT": ("buildIATPropS", ["t"], 1),
             "INFIX": ("buildInfixProperty", ["a.Sigma"], 1),
             "OUTFIX": ("buildOutfixProperty", ["a.Sigma"], 1),
             "PREFIX": ("buildPrefixProperty", ["a.Sigma"], 1),
             "SUFFIX": ("buildSuffixProperty", ["a.Sigma"], 1),
             "TRAJECT": ("buildTrajPropS", ["$strexp", "$sigma"], 1)
}
\end{verbatim}
\begin{verbatim}
tests = {"MAXP": "maximalP",
         "MAXW": "notMaximalW",
         "SATP": "satisfiesP",
         "SATW": "notSatisfiesW"}.
\end{verbatim}
Next we show the actual program generation function which takes as parameters
the property name (\texttt{pname}), which must be one appearing in the dictionary \texttt{buildName},
the question to answer (\texttt{test}), the file name for the automaton
(\texttt{aname}), the possible trajectory regular expression string and alphabet (\texttt{strexp} and
\texttt{sigma}),
and the possible file name for the transducer (\texttt{tname}). The function returns a list of strings that
constitute the lines of the desired Python program---we have omitted here the initial lines that contain the
commands to import the required FAdo modules.
\begin{verbatim}
01   def program(pname, test=None, aname=None, strexp=None, sigma=None,
                 tname=None):
02       def expand(s):
03           s1 = Template(s)
04           return s1.substitute(strexp=strexp, sigma=sigma)
\end{verbatim}
\begin{verbatim}
05       l = list()
06       l.append("ax = \"%s\"\n" % base64.b64encode(aname))
07       l.append("a = readOneFromString(base64.b64decode(ax))\n")
08       if buildName[pname][2] == 1:
09           if tname:
10               l.append("tx = \"%s\"\n" % base64.b64encode(tname))
11               l.append("t = base64.b64decode(tx)\n")
12           s = "p = " + buildName[pname][0] + "("
13           for s1 in buildName[pname][1]:
14               if s1 == "$strexp":
15                   s += "t, "
16               else:
17                   s += "%s," % expand(s1)
18           s = s[:-1] + ")\n"
19           l.append(s)
20           l.append("print p.%s(a)\n" % tests[test])
21       else: ...............
\end{verbatim}
%22           s = "print " + buildName[pname][0] + "("
%23           for s1 in buildName[pname][1]:
%24               s += "%s," % expand(s1)
%25           s = s[:-1] + ")\n"
%26           l.append(s)
%27       return l
%\end{verbatim}
We refer to the lines above as meta-lines, as they are used to generate lines of the
desired Python program.
Meta-line~06 above generates the first line of the program, which would read the automaton file in a string
\texttt{ax} that is encoded in binary to allow
for safe transmission and reception over different operating systems. The next meta-line generates the line
that would create an NFA object from the decoded string \texttt{ax}. The if part of the \texttt{if-else} statement
is the one we use in
this paper---the else part is for other LaSer questions. If there is a transducer file then, as
in the previous case of the automaton file, LaSer generates a line that would create an SFT object from the encoded file.
Then, meta-lines~12--18 generate a line that would create the property \texttt{p} to be processed,
using the appropriate build-property function. Finally, meta-line~20 generates the line that would
print the result of invoking the desired method  \texttt{test} of the property \texttt{p}.

%%%%%%%%%%%%%%%%%%%%%%%%%%%%%%%%%%%%%%%
%\section{Testing}
%Many transducer questions are undecidable \cite{Be:1979,John:1986}

%%%%%%%%%%%%%%%%%%%%%%%%%%%%%%%%%%%%%%%%%%%%%%%%%%%%%%%%%
\section{Concluding Remarks}\label{sec:conclude}
We have presented a simple to use set of methods and functions that allow one
to define many natural code properties and obtain answers to the satisfaction and
maximality problems with respect to these properties. This capability relies on our
implementation of basic transducer methods, including our witness version of the non-functionality
transducer method, in the  FAdo set of Python packages.
We have also produced a new version of LaSer that allows clients to inquire about
error-detecting and -correcting properties, as well as to generate programs that
can be executed and provide answers at the client site.
\pmsn
There are a few important directions for future research. First, the existing implementation of
transducer objects is not always efficient when it comes to describing code properties.
For example, the transducer defined in Example~\ref{exSub1} consists of 6 transitions.
In general, if the alphabet is of size $s$, then that transducer would require $s+s(s-1)+s=s^2+s$
transitions. However, a symbolic notation for transitions, say of the form
\begin{verbatim}
  0 @s @s 0
  0 @s @s' 1
  1 @s @s 1
\end{verbatim}
is more compact and can possibly be used by modifying the appropriate transducer methods.
In this hypothetical notation, \verb1@s1 represents any alphabet symbol and \verb1@s'1 represents any
alphabet symbol other than the previous one. Of course the syntax of
such transducer descriptions has to be defined carefully so as to be as expressive as possible and
at the same time efficiently utilizable in transducer methods.
We note that a similar idea for symbolic transducers is already investigated in \cite{Vea:2013}.
\pssi
Formal methods for defining code properties need to be better understood or new ones need to be developed
with the aim of ultimately implementing these properties and answering  efficiently the satisfaction
problem. Moreover, these methods  should be capable of allowing to express properties that cannot
be expressed in the transducer methods considered here. In particular, all transducer properties
in this work are independences with parameter $n=3$, so they do not include, for instance, the
comma-free code property. A language $L$ is a comma-free code~\cite{Shyr:book} if
\[
LL\cap \al^+ L\al^+=\emptyset.
\]
The formal method of~\cite{Jurg:1999} is quite expressive, using  a certain type
of first order formulae to describe properties. It could perhaps be  further
worked out in a way that some of these formulae can be mapped to finite-state machine
objects that are, or can be, implemented in available formal language packages like FAdo.
We also note that if the defining method is too expressive then
even the satisfaction problem could become undecidable---see for example the method of
multiple sets of trajectories in \cite{DomSal:2006}.
\pmsn
We consider the construction problem to be the Holy Grail of coding theory. We believe
that as long as the satisfaction problem is efficiently decidable one can use an algorithmic
approach to address the problem to some extent. For example, we already have
implemented a randomized algorithm that starts with
an initial small language $L$ (in fact a singleton one) and then performs  a loop in which
it does the following task.
It  randomly  picks words $w$ of length $\ell$ that are not in $L$ until
$L\cup w$ satisfies the property,
in which case $L$ becomes $L\cup w$, or no such $w$ is found after a MAX value
of trials. 
Although this is a simple approach, we feel that it can lead to further developments with respect to
the construction problem.

%%%%%%%%%%%%%%%%%%%%%%%%%%%%%%%%%%%%%%%%%%%%%%%%%%%%%%%%%
\bibliographystyle{plain}
%\bibliography{refs}

\end{document}